\def\@ACM@checkaffil{
    \if@ACM@instpresent\else
    \ClassWarningNoLine{\@classname}{No institution present for an affiliation}%
    \fi
    \if@ACM@citypresent\else
    \ClassWarningNoLine{\@classname}{No city present for an affiliation}%
    \fi
    \if@ACM@countrypresent\else
        \ClassWarningNoLine{\@classname}{No country present for an affiliation}%
    \fi
}
\pgfplotsset{compat=1.16} 
\newcommand{\cdottight}{\mspace{-3mu}\cdot\mspace{-3mu}}
\newcommand{\kinfty}{\lim_{k \rightarrow \infty}}
\newcommand{\rhoone}{\lim_{\rho \rightarrow 1}}
\newcommand{\thr}{\tau}
\newcommand{\bhw}{Sub-Halfin-Whitt\xspace}
\newcommand{\shw}{Sub-Halfin-Whitt\xspace}
\newcommand{\Eps}{\mathcal{E}}
\newcommand{\snds}{Super-NDS\xspace}
\theoremstyle{acmplain}
\newtheorem{theorem}{Theorem}
\newtheorem{lemma}[theorem]{Lemma}
\newtheorem*{claim*}{Claim}
\theoremstyle{definition}
\theoremstyle{acmdefinition}
\newcommand{\LPF}{\tiny{\mbox{LPF}}\xspace}
\newcommand{\LPFONEM}{\tiny{\mbox{LPF-1}}\xspace}
\newcommand{\LPFONE}{\mbox{LPF-1}\xspace}
\newcommand{\SERPT}{\tiny{\mbox{SERPT}}\xspace}
\newcommand{\SERPTONE}{\mbox{{\tiny SERPT-1}}\xspace}
\newcommand{\lambdak}{\lambda^{(k)}}
\newcommand{\THRESH}{\tiny{\mbox{THRESH}}\xspace}
\newcommand{\thrpolicy}{\textup{THRESH}\xspace}
\newcommand{\mand}{\qquad \mbox{and} \qquad}
\newcommand{\E}{\mathbb{E}}
\newcommand{\ET}[2][]{\mathbb{E}\left[T^{#2}_{#1}\right]}
\newcommand{\EW}[2][]{\mathbb{E}\left[W^{#2}_{#1}\right]}
\newcommand{\EN}[2][]{\mathbb{E}\left[N^{#2}_{#1}\right]}
\newcommand{\1}[1]{\mathbbm{1}_{\left(#1\right)}}
\newcommand{\const}[3]{#1^{(#2)}_{#3}}
\author[Berg]{Benjamin Berg}
\affiliation{%
  \institution{The University of North Carolina at Chapel Hill}
}
\email{ben@cs.unc.edu}
\author[Moseley]{Benjamin Moseley}
\affiliation{%
  \institution{Carnegie Mellon University}
}
\email{moseleyb@andrew.cmu.edu}
\author[Wang]{Weina Wang}
\affiliation{%
  \institution{Carnegie Mellon University}
}
\email{weinaw@cs.cmu.edu}
\author[Harchol-Balter]{Mor Harchol-Balter}
\affiliation{%
  \institution{Carnegie Mellon University}
}
\email{harchol@cs.cmu.edu}
\title{Asymptotically Optimal Scheduling of Multiple Parallelizable Job Classes}
\begin{document}


\begin{abstract}
Modern computing workloads are often composed of parallelizable jobs.
A parallelizable job can be completed more quickly when run on additional servers.
However, each job can only use a limited number of servers, known as its parallelizability level, which is determined by the type of computation the job performs and how it is implemented.
Workloads generally consist of multiple job classes, where jobs from different classes have different parallelizability levels and follow different job size (service requirement) distributions.

This paper considers scheduling parallelizable jobs belonging to an arbitrary number of job classes.
Given a limited number of servers, we must allocate servers across a stream of arriving jobs to minimize mean response time --- the average time from when a job arrives to the system until it completes.
We find that in lighter-load scaling regimes (i.e., Sub-Halfin-Whitt), the optimal allocation policy is Least-Parallelizable-First (LPF), which prioritizes jobs from the least parallelizable job classes regardless of their size distributions.
By contrast, we find that in the heavier-load regimes (i.e., Super-NDS), the optimal allocation policy prioritizes jobs with the Shortest Expected Remaining Processing Time (SERPT).
We also develop policies that are asymptotically optimal when the scaling regime is not known a priori.
\end{abstract}

\keywords{Parallel Computing, Scheduling, Speedup Functions, Scaling Regimes} 
\maketitle
\section{Introduction}
The vast majority of modern computer systems are designed to exploit parallelism.
Whether considering a single multicore server, a compute cluster of many servers, or even an entire data center, achieving good system performance requires carefully choosing how to parallelize jobs across the underlying cores or servers in the system.
For example, database queries for an in-memory database must be parallelized across the cores of a multicore server to achieve low query latency \cite{leis2014morsel, wagner2021self}, machine learning training jobs must be parallelized across the many GPUs of a compute cluster to reduce job completion time \cite{qiao2021pollux,jayaram2023sia},
and datacenter computing workloads must be parallelized across the many servers of a datacenter \cite{tumanov2016tetrisched,delgado2018kairos}.
In each of these cases, the system operator is presented with a fixed set of cores or servers and asked to devise a strategy to allocate these resources across a stream of parallelizable jobs.
However, devising optimal scheduling policies for parallelizable jobs in multiserver systems remains an open problem.

This paper develops 
asymptotically optimal scheduling policies for parallelizable jobs.
Specifically, given a limited number of servers, $k$, we 
ask how to allocate these servers to a stream of parallelizable jobs arriving over time to minimize the mean response time across jobs.

If a workload consists solely of perfectly parallelizable jobs that can each fully utilize all $k$ cores, the problem is easy because each job effectively sees one fast server that runs $k$ times faster than a single server. 
Unfortunately, in practice, parallelizable jobs typically have limits to their level of parallelizability.
For example, a database query is compiled to run with a certain number of threads, $t$, where $t \ll k$  (recall $k$ is the total number of servers) \cite{chen2016memsql}.
Allocating more than $t$ cores to this query will not reduce its processing time.
Similarly, a job's level of parallelism may be limited by overhead from communication between its threads.
For example, machine learning training jobs can be parallelized across many servers, but these servers must synchronize at the end of each training iteration \cite{lin2018model,qiao2021pollux}.
Additionally, some jobs consist of a mixture of parallelizable work and non-parallelizable work.
This leads to the well-known phenomenon described as \emph{Amdahl's Law}, which says that the speedup a job can receive from parallelism is bounded in terms of the fraction of the job that is parallelizable~\cite{Hill:2008:ALM:1449375.1449387}.

Each of these factors will affect the parallelizability of different jobs to different degrees.
Some jobs in a workload may be highly parallelizable, while others may benefit only slightly from running on additional servers.
This mixture of jobs with different levels of parallelizability is commonly seen in databases \cite{leis2014morsel}, ML workloads \cite{qiao2021pollux}, and datacenter scheduling \cite{tumanov2016tetrisched}.
\emph{This paper develops new theoretical analyses of systems with parallelizable jobs of varying parallelizability levels.}

\subsection{High-level problem statement}
To capture the limited parallelism in today's parallel computing workloads, we approximate the behavior of different parallelizable jobs by assuming that jobs are perfectly parallelizable up to some \emph{parallelizability limit}.  
This limit represents aspects of how the job was compiled, the communication overhead the job incurs, and what fraction of the job is parallelizable.  
Formally, we characterize each job by a pair $(S,c)$ where $S$ is a random variable denoting the job's {\em inherent size} and $c$ is its parallelizability level. 
The time required to complete a parallelizable job is a function of both its inherent size and the servers allocated to it over time. 
When a job runs on $m \leq c$ servers, its remaining size decreases at a rate of $m$ units of work per second until the remaining size reaches 0 and the job is considered complete.
The job cannot utilize more than $c$ servers.

To capture the varying levels of parallelizability seen within a single workload, we allow jobs to belong to one of $\ell$ \emph{job classes}.
Each job class, $i$, has an associated job size distribution, $S_i$, and parallelism limit, $c_i$, that applies to every class $i$ job. 
We assume that a job's exact size is unknown to the system, but its class is known, i.e., the system knows $(S_i, c_i)$, for all $i$.
Furthermore, we assume that jobs are \emph{malleable}, meaning that a job can change the number of servers it runs on over time \cite{gupta2014towards}.

Given a stream of malleable jobs that arrive to the system over time, a \emph{scheduling policy} must determine how many servers to allocate to each job at every moment in time.  
Given a system with $k$ homogeneous servers, our goal is to develop scheduling policies that minimize the \emph{mean response time} across jobs --- the average time from when a job arrives to the system until it is completed.

\subsection{Prior Work}\label{s:prior}

Scheduling in multiserver systems is known to be a difficult problem. 
There is a variety of work from the systems community on scheduling parallelizable jobs, including work on scheduling in databases \cite{leis2014morsel,wagner2021self}, ML clusters \cite{qiao2021pollux,jayaram2023sia}, data analytics clusters \cite{vernica2010efficient,zhu2014minimizing}, and data centers \cite{tumanov2016tetrisched,delgado2018kairos}. 
This work uses simple policies with no formal guarantees.
Additionally, many systems optimize for metrics such as fairness in addition to trying to minimize response times.

In the theoretical scheduling community, even when jobs are not parallelizable and each job occupies a single server, the optimal scheduling policy is not known.
Recent results (\cite{grosof2018srpt}) derived the first bounds on the mean response time of Shortest-Remaining-Processing-Time (SRPT) in multiserver systems, showing that SRPT is optimal in conventional heavy traffic (see Section \ref{sec:model}).
These heavy-traffic results were extended to a Gittins index policy~\cite{scully2020gittins} in the case where job sizes are not fully known. 
Unfortunately, none of these results deal with parallelizable jobs.

Another related area of active research is scheduling for \emph{multiserver jobs} \cite{harchol2022multiserver,grosof2020stability,hong2022sharp}.
Multiserver jobs each have a hard, fixed resource requirement on the number of servers they need.
Different jobs have different fixed server requirements.
In general, it is not known how to minimize the mean response time for multiserver jobs, even in heavy traffic.
Under some constraints on the server requirements of each job, an SRPT variant called ServerFilling-SRPT was recently shown to be optimal in conventional heavy traffic \cite{grosof2022wcfs,grosof2022optimal}.
Because jobs have hard resource requirements in the multiserver job model, one of the main concerns of a scheduling policy is to ensure that jobs are efficiently packed onto the available servers so that the system utilization remains high.
Failing to pack the jobs efficiently can cause the system to be unstable.
By contrast, the malleable jobs we consider do not have this stability concern, since they can adapt their levels of parallelism to ensure that the servers remain fully utilized.
Hence, the policies developed for scheduling multiserver jobs tend to prioritize packing in a way that has no benefit when scheduling malleable jobs.

The work closest to our model considers scheduling parallelizable jobs with \emph{speedup functions} \cite{berg2018,berg2020hesrpt,berg2020optimal,berg2022case,Edmonds1999SchedulingIT}.
Here, each job has an associated speedup function that determines the job's service rate on different numbers of servers.
Of these papers, \cite{berg2018,berg2020hesrpt} focus on the case where there is only one class of jobs that all follow the {\em same} speedup function.
The case of two classes of jobs is considered in  \cite{berg2020optimal,berg2022case}, but jobs are assumed to be either non-parallelizable or fully parallelizable, and the results require highly restrictive assumptions about the job size distributions. Little is known in the case where jobs have more than two levels of parallelizability. 

Some further-afield work on parallel scheduling has used different models and objectives to design scheduling policies.
Work on \emph{DAG scheduling} has considered the problem of scheduling parallelizable tasks with precedence constraints that are specified by the edges of a Directed Acyclic Graph (DAG) \cite{agrawal2016scheduling,blumofe1999scheduling,agrawal2008adaptive}.
Much of this work considers the problem of scheduling a {\em single} DAG \cite{blumofe1999scheduling,agrawal2008adaptive}.  For scheduling a stream of DAGs, only pessimistic worst-case bounds are known \cite{agrawal2016scheduling}.

\subsection{The Trade-off}
As we saw in Section~\ref{s:prior}, 
SRPT-like policies that prioritize short jobs are optimal in conventional heavy traffic for a variety of multiserver scheduling models.
However, when the system is not so heavily loaded, it is not obvious that favoring short jobs suffices for optimality.
In particular, when scheduling malleable jobs, there is another consideration that is at least as important: {\em deferring parallelizable work}.
Here, a scheduling policy can raise the long-run average utilization of the system by prioritizing less parallelizable jobs ahead of more parallelizable jobs.
The benefits of deferring parallelizable work were first described in \cite{berg2018}, which considers a less complex model with just two classes of jobs.

To see the benefit of deferring parallelizable work, suppose we have a job, A, that can only use 1 server and a job, B, that can parallelize across all $k$ servers.
If we prioritize job B, there will be a period after B completes where job A runs and $k-1$ servers sit idle.
To minimize mean response time, it is better to give 1 server to job A and use the remaining $k-1$ servers to run job B.
This holds regardless of the sizes of the jobs.
By prioritizing the less parallelizable job A, we {\em defer} working on the more parallelizable job B.

This raises the question of how to handle the {\em trade-off} between favoring jobs that are shorter versus deferring parallelizable work.
What does a policy do with a job that has a small expected remaining size, but is also highly parallelizable?  
Does the policy prioritize this job due to its small size, or defer working on this job because of its high parallelizability?
To derive an optimal scheduling policy, we must compare the relative benefits of these two effects.
We find that the optimal choice between prioritizing short jobs and deferring parallelizable work depends on the scaling behavior of the system.
When the system scales such that queueing is rare, it is more important to defer parallelizable work.
However, if the system scales such that the probability of queueing goes to 1, it is more important to prioritize short jobs.

\subsection{Contributions}
This paper derives the first results on asymptotic optimality when scheduling many classes of parallelizable jobs.
Specifically, this is the first work to develop asymptotically optimal policies when there are more than two classes of parallelizable jobs.
We analyze the mean response time of $\ell$ classes of jobs, each with a different level of parallelizability.  
We prove our results by considering how the system behaves under a variety of \emph{scaling regimes}, where the number of servers goes to infinity and the system load goes to 1 simultaneously (see Section \ref{sec:model}).
While our results rely on asymptotic analysis, the results cover a wide range of scaling regimes, including \shw regimes and mean field (constant load) scaling.
We find that the choice of an optimal policy depends on the chosen scaling regime.
Specifically, our contributions are as follows.
\begin{itemize}
\item When the system load is not too heavy (\shw), we show that the benefit of deferring parallelizable work outweighs the benefit of prioritizing short jobs.  In this case, it is beneficial to use a \emph{Least-Parallelizable-First} (LPF) policy that assigns the highest priorities to the least parallelizable classes of jobs.
We show in Theorem \ref{thm:if:opt} that LPF is asymptotically optimal with respect to mean response time.
Furthermore, we show in Theorem \ref{thm:srptsub} that it can be suboptimal to simply prioritize short jobs in these lighter-load scaling regimes.
\item Conversely, when system load is very heavy (\snds), it is more important to prioritize short jobs by using the SERPT policy as shown in Theorem \ref{thm:snds}.  In these scaling regimes, there are almost always enough jobs to fully utilize all $k$ servers.  Hence, it is more important to minimize queueing times by using SERPT than it is to defer parallelizable work.  As a result, we also prove in Theorem \ref{thm:ifsubopt} that LPF can be suboptimal in these heavier-load scaling regimes.
\item While the above results suggest that one may need to know the system scaling behavior \emph{a priori} to select an asymptotically optimal policy, we show that this is not the case.
Specifically, we define a policy called \thrpolicy that switches between LPF and SERPT depending on how many jobs are in the system.  We show in Theorem \ref{thm:thr:opt} that \thrpolicy can perform well in both \shw and \snds scaling regimes by deferring parallelizable work (using LPF) when there are few jobs in the system and prioritizing short jobs (using SERPT) when there are many jobs in the system.
\item Finally, we examine extensions to our theoretical results via simulation in Section \ref{sec:eval}.
\end{itemize}

\section{Model}\label{sec:model}

We model a system with $k$ identical servers managed by a centralized scheduler.
Arriving jobs are placed in a central queue.
We consider the case where jobs are preemptible and malleable, meaning that the scheduler can change the number of servers on which a job runs over time.
While more complex architectures may exist, this scenario is common in a wide range of real-world systems that process parallelizable jobs.

Without loss of generality (WLOG), we assume each server runs at speed~1.
We assume that each job belongs to one of $\ell$ \emph{job classes}.
Each job class, $i$, has its own job size distribution, denoted by the random variable $S_i$, and parallelizability level, $c_i$.

A job's {\em size} denotes its running time (service time) on a {\em single} server.
We generally assume that job sizes are unknown and exponentially distributed, i.e., 
$$S_i \sim \mbox{Exp}(\mu_i) \quad \forall 1\leq i \leq \ell.$$
Our theoretical results assume exponential job sizes, but we discuss other size distributions in Section \ref{sec:eval:gen}.  

Class $i$ jobs have a  \emph{parallelizability level} of $c_i$, meaning a class $i$ job can run on up to $c_i$ servers.
When a class $i$ job is run on $m\leq c_i$ servers, the job is served at rate $m$.
As a result, this job's remaining service time is distributed as $\mbox{Exp}(m\cdot\mu_i)$.

We assume class $i$ jobs arrive according to a Poisson process with rate $\lambdak_i$, where the total arrival rate is $\lambdak = \sum_{i=1}^\ell\lambdak_i$. 
We define the system load as
$$\rho\ := \sum_{i=1}^\ell \rho_i \quad\mbox{where}\quad
\rho_i := \frac{\lambdak_i}{k\mu_i} \quad \forall \  1 \leq i \leq \ell.$$
The system is stable under any work-conserving scheduling policy if $\rho<1$~\cite{berg2020optimal}.

\subsubsection*{Asymptotic Analysis}
We analyze the system under several scaling regimes.
In the \emph{conventional heavy traffic} scaling regime, the number of servers, $k$,  is held constant, and we take the limit as $\rho \rightarrow 1$.
We will also consider several {\em scaling regimes} where $k \rightarrow \infty$ and the arrival rate, $\lambdak$, scales with $k$, meaning that load, $\rho$, scales with $k$.  
See Appendix \ref{sec:asymptoticnote} for a summary of the standard asymptotic notation used in this paper.

We assume that the proportion of jobs from each class and the job size distributions remain fixed.
That is, we define the constants $p_1, p_2, \ldots, p_\ell$ such that 
$$\lambdak_i = p_i \lambdak \qquad \forall 1 \leq i \leq \ell.$$


We restrict ourselves to scaling regimes where $\lambdak = \Theta(k)$ so that $\rho$ does not go to 0 as we scale $k$.
%
It is convenient to describe the scaling behavior of $\lambdak$ in terms of $\alpha$, the number of ``spare servers'' worth of capacity.
Specifically, we define $\alpha$ to be
$$\alpha := k(1-\rho).$$
Note that while $\alpha$ and $\rho$ depend on $k$, we omit this dependence from the notation for brevity.
We will write expressions in terms of $\lambdak$ when an explicit dependence on $k$ requires emphasis.

The asymptotic behavior of many queueing systems is known to depend on the asymptotic behavior of $\alpha$. For example, consider a simple $M/M/k$ system.
When $\alpha=\Theta(k)$, the scaling regime is known as the \emph{large-system limit} or \emph{mean-field limit}.
The probability of queueing in an $M/M/k$ goes to 0 in the mean-field limit.
When $\alpha = \Theta(\sqrt{k})$, the scaling regime is known as the Halfin-Whitt regime \cite{halfin1981heavy}.
The probability of queueing in an $M/M/k$ converges to a constant in the Halfin-Whitt regime, but the mean queueing time goes to 0.
When $\alpha = \Theta(1)$, the scaling regime is known as the non-degenerate slowdown (NDS) regime \cite{atar2012diffusion,gupta2019load}.
The probability of queueing in an $M/M/k$ goes to 1 in the NDS regime, and the mean queueing time converges to a constant.
We refer to all scaling regimes where $\alpha = \omega(\sqrt{k\log k})$ as \emph{\shw} scaling regimes.
\footnote{The \shw regime was first introduced in \cite{liu2020steady} and was then subsequently discussed in \cite{varma2023power,storm2023diffusion}.  The original definition requires $\alpha=\Theta(k^b)$ for any $.5 < b < 1$.  Our definition is more general in that it requires just $\sqrt{\log k}$ separation from the Halfin-Whitt regime and also includes mean-field scaling regimes.}
In the \shw regime, the probability of queueing in an $M/M/k$ goes to 0.
We refer to all scaling regimes where $\alpha = o(1)$ as \emph{\snds} regimes.
The probability of queueing in an $M/M/k$ goes to 1 and the mean queueing time goes to $\infty$ in \snds regimes.
Note that, under all \snds scaling regimes, $\rho \rightarrow 1$.
Under the \shw scaling regimes, $\rho$ either goes to 1 or converges to a constant less than 1, as in mean-field scaling.
{\em It is not clear which, if any, of the results from the simple $M/M/k$ system hold when scheduling parallelizable jobs.}




While $\rho_i$ depends on $k$ for each class $i$, $\rho_i$ asymptotically approaches a subcritical load.
Specifically, let
$$\rho_i^* := \kinfty \rho_i = \lambdak p_i \E[S_i] \qquad \forall 1 \leq i \leq \ell$$
where $0<\rho_i^* < 1$ for any class $i$, and $\sum_{i=1}^\ell \rho^*_i \leq 1.$
Further, $\sum_{i=1}^\ell \rho^*_i = 1$ in any heavy-traffic scaling regime.

\subsubsection*{Scaling parallelizability levels}
We previously stated that each job class, $i$, has an associated parallelizability level $c_i$.
For some job classes, this parallelizability level results from algorithmic limitations or limitations of a job's software implementation.
For these job classes, the parallelizability level will not change as the system scales.
However, other more-parallelizable classes of jobs can be easily reconfigured as the system scales to leverage newly available servers. 
We therefore allow the value $c_i$ to scale with $k$ for some classes of jobs.
In general, we allow each $c_i$ to be any non-decreasing function of $k$.
In this way, we can represent a variety of real-world system constraints that might cause a job's parallelizability level to grow sublinearly with $k$. 
For example, in a system with a complex network topology, it is common that only a fraction of the servers (e.g., $\sqrt{k}$) can communicate directly with one another.
Hence, for a class of jobs with high communication overhead, the parallelizability level of the jobs might scale as $\sqrt{k}$.

WLOG, let the classes be asymptotically ordered by parallelizability so that, for sufficiently large $k$,
$c_1 \leq c_2 \leq \ldots \leq c_\ell.$
Furthermore, we assume that $c_1=O(1)$, so that at least one class of jobs that requires a constant number of servers.
This mild technical assumption holds in the vast majority of real-world systems.
For simplicity, we assume that $c_1=1$, but setting $c_1$ to some other constant does not affect our results.

\subsubsection*{Scheduling Policies}
A \emph{scheduling policy} must decide, at every moment in time, how many servers to allocate to each job in the system.
Let $\{N(t)\}$ be the stochastic process denoting the number of jobs in the system at time $t\geq0$.
Because we only consider scheduling policies that do not idle servers unnecessarily, $\rho<1$ implies that $\{N(t)\}$ is a positive-recurrent, irreducible continuous-time Markov chain.
By ergodicity, there exists a distribution, $N$, such that $N(t) \overset{a.s.}\rightarrow N$ as $t \to \infty$.
$N$ is the unique stationary distribution of the Markov chain and $\E[N] < \infty$.
We refer to $\E[N]$ as the \emph{mean number of jobs in system}.
We define $\{N_i(t)\}$, $N_i$, and $\E[N_i]$ analogously, but to describe class-$i$ jobs in the system.
This gives $\E[N] = \sum_{i=1}^\ell \E[N_i]$.
Let $X=(N_1, N_2, \ldots, N_\ell)$ denote the full state of the system in stationarity.

Let $T$ denote the \emph{response time} --- the time from when a job arrives until it completes --- of a job that arrives to a system whose state is drawn according to $X$.
We refer to $\E[T]$  as the \emph{mean response time} across jobs.
By Little's law, $\E[T] = \frac{\E[N]}{\lambdak} < \infty$.
We similarly define $\E[T_i]$ to be the mean response time of a class-$i$ job, where $\E[T] = \sum_{i=1}^\ell \E[T_i]p_i$ by conditioning.
Our goal is to find scheduling policies that minimize $\E[T]$ (or, equivalently, minimize $\E[N]$). 
We assume the scheduling policy knows the class information ($\mu_i$ and $c_i$) for each job, as well as the system state. 

Let $\ET{\pi}$ denote the mean response time under any scheduling policy, $\pi$.
$\pi$ is \emph{asymptotically optimal} if
$$\kinfty \frac{\ET{\pi'}}{\ET{\pi}} \geq 1$$ for all policies $\pi'$.
Again, the asymptotic optimality of a policy generally depends on the asymptotic behavior of $\alpha$.
Asymptotic optimality is defined analogously in the conventional heavy traffic limit.

Our results primarily focus on analyzing two policies: one based solely on deferring parallelizable work, and the other based solely on favoring short jobs.
The policy that defers parallelizable work is called Least-Parallelizable-First (LPF).
LPF preemptively prioritizes the jobs with the lowest parallelizability levels.  For example, LPF will allocate as many servers as possible to jobs belonging to the first $i$ classes before allocating any leftover servers to jobs in the $i+1$st class.
The policy that favors short jobs is called the Shortest-Expected-Remaining-Processing-Time (SERPT) policy \cite{scully2020optimal}, which at all times 
 preemptively prioritizes the job with the smallest expected remaining size.  Recall that a job's size is its running time on a single server.  The SERPT policy allocates as many servers as can be used to the job with the smallest expected remaining size before allocating any servers to the next larger size job.

We also introduce a new policy called Threshold (THRESH), which is the first policy to dynamically combine the orthogonal goals of deferring parallelizable work and favoring short jobs.  THRESH runs LPF when the number of jobs in the system is small, and runs SERPT when the number of jobs is large.  
We will fully define THRESH in Section \ref{sec:disc}.

In analyzing these policies, we use a combination of coupling and drift arguments. 

\section{Scheduling Parallelizable Jobs in \shw Regimes}\label{sec:shw}


In this section, we consider scheduling in the \shw scaling regimes described in Section \ref{sec:model}.
We begin by showing that LPF, the policy that defers parallelizable work by prioritizing the least parallelizable jobs, is asymptotically optimal in this case.
Intuitively, in the lighter-load \shw scaling regimes where queueing is rare, the benefit of prioritizing short jobs ahead of long jobs is reduced.
Furthermore, because there will frequently be fewer than $k$ jobs in the system under these scaling regimes, it is important to ``save up" the more parallelizable jobs, which are better at occupying many servers when there are few jobs.
LPF performs well because it keeps more servers occupied than other policies.

Given that there is little queueing in \shw regimes, one might expect a wide variety of policies to be asymptotically optimal.
However, we also show the surprising result that SERPT can be asymptotically suboptimal in \shw regimes.
If SERPT happens to prioritize very parallelizable jobs that fully utilize the system, all the other jobs in the system are forced to queue unnecessarily.
The system is then underutilized when running less-parallelizable jobs.
Hence, the choice of scheduling policy greatly affects the overall mean response time of the system, even in \shw scaling regimes.

We begin by showing the asymptotic optimality of LPF in Theorem \ref{thm:if:opt}.

%
%
%

\begin{restatable}{theorem}{ifsub}\label{thm:if:opt}
If $\alpha = \omega(\sqrt{k\log k})$, then for any scheduling policy $\pi$,
$$\kinfty \frac{\ET{\pi}}{\ET{\LPF}} \geq 1.$$
That is, LPF is asymptotically optimal with respect to mean response time in \shw scaling regimes.
\end{restatable}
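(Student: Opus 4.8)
The plan is to prove the theorem by a sandwiching argument: I will lower-bound the mean response time of every policy by a quantity $L$ that depends only on the intrinsic (full-parallelism) service requirement of each class, and then show that in the \shw regime LPF attains this same $L$ up to a vanishing additive correction. Concretely, I would establish that for any policy $\pi$,
\[
\ET{\pi} \;\geq\; L \;:=\; \sum_{i=1}^{\ell} p_i \frac{\E[S_i]}{c_i},
\qquad\text{while}\qquad
\ET{\LPF} \;\leq\; L + o(1).
\]
The lower bound is immediate and holds pathwise: a class-$i$ job carries $S_i$ units of work and can be served at rate at most $c_i$, so it spends at least $S_i/c_i$ time in service, whence $T_i^{\pi}\geq S_i/c_i$ under any $\pi$; averaging over classes with weights $p_i$ gives $\ET{\pi}\geq L$. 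Since $c_1=1$ forces $L \geq p_1\E[S_1] = p_1/\mu_1 = \Theta(1)$, the two bounds yield $\ET{\pi}/\ET{\LPF} \geq L/(L+o(1)) \to 1$, which is exactly the claim.

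The substance of the argument is the upper bound on LPF. I would couple the LPF system to an idealized \emph{unconstrained} system in which every job always runs on its full complement of $c_i$ servers; this system decouples into $\ell$ independent $M/M/\infty$ queues, incurs no queueing delay, and hence achieves mean response time exactly $L$, with class-$i$ occupancy $\hat N_i$ that is Poisson with mean $k\rho_i/c_i$. Under the natural coupling the two systems coincide (so $N^{\LPF}=\hat N$) except during \emph{overload} excursions, in which the cumulative server demand of the top-$i$ priority classes, $D_{\leq i} := \sum_{j\leq i} c_j \hat N_j$, exceeds $k$ and LPF can no longer serve class $i$ at full rate. The key observation is that $\E[D_{\leq i}] = k\sum_{j\leq i}\rho_j \leq k\rho = k-\alpha$, so an overload forces $D_{\leq i}$ to deviate from its mean by at least $\alpha$.

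A Bernstein/Chernoff bound on the sum of the bounded, nearly-Poisson terms $c_j\hat N_j$ (whose total variance is $\Theta(k)$ when the relevant $c_j$ are $O(1)$) then gives $\Pr[D_{\leq i} > k] \leq \exp(-\Omega(\alpha^2/k))$. Under $\alpha = \omega(\sqrt{k\log k})$ this probability is $o(1/\mathrm{poly}(k))$, which is precisely the strength needed to dominate the $\Theta(k)$ factors introduced when converting occupancy to response time via Little's law; this is exactly where the extra $\sqrt{\log k}$ separation is used. Feeding the overload probability back through the coupling yields $\E[N^{\LPF}] = \E[\hat N](1+o(1))$ and hence $\ET{\LPF} = L + o(1)$.

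Two points are the main obstacles. First, converting the instantaneous overload probability into a bound on the \emph{stationary} excess $\E[N^{\LPF}]-\E[\hat N]$ requires controlling how much backlog LPF accumulates during an overload excursion; I would handle this with a drift (Lyapunov) argument showing that $N^{\LPF}$ experiences strong negative drift once it is large, so that the rare overload events cannot inflate mean occupancy beyond a lower-order term. Second, and more delicate, is the treatment of classes whose parallelizability $c_i$ grows with $k$: for these, $D_{\leq i}$ does not concentrate at the $\sqrt{k}$ scale and the Chernoff step fails. For such classes I would instead argue directly that their contribution to mean response time is negligible — their intrinsic service time $1/(c_i\mu_i)$ vanishes, and because they sit at the \emph{bottom} of the LPF priority order they cannot delay the $O(1)$-parallelizable classes that dominate $L$. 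Reconciling these two regimes, sharp concentration for the small-$c_i$ classes that matter and a direct smallness argument for the large-$c_i$ classes, is the crux of the proof.
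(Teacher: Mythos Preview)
Your plan is correct and structurally mirrors the paper's proof: both use the per-class lower bound $\ET[i]{\pi}\ge 1/(c_i\mu_i)$, both split the LPF upper bound into classes with $c_i=O(1)$ (handled by concentration) versus $c_i=\omega(1)$ (handled separately), and both ultimately lean on Lyapunov drift. The difference is your $M/M/\infty$ coupling. It supplies the right target --- the Poisson means $k\rho_i/c_i$ are exactly the thresholds $\beta_i$ the paper centers on --- but the Chernoff bound controls the tail of $\hat N$, not of $N^{\LPF}$, and since $N^{\LPF}\ge\hat N$ pathwise the coupling inequality points the wrong way for what you need; moreover, once LPF falls behind the idealized system the two do not automatically re-couple when $\hat D_{\le i}$ drops below $k$. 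You correctly flag this and propose drift as the repair, but at that point the coupling is no longer doing work. The paper simply starts with drift: it proves $\Pr[N_i^{\LPF}\ge\beta_i]\le e^{-\Omega(\alpha^2/k)}$ directly on the LPF chain, inductively over $i$ using the bound on $\Pr[X\notin\Eps_{i-1}]$ to handle states where higher-priority classes have stolen too many servers, and then converts tail bounds to means via rate conservation on $V(x)=((n_i-\theta_i)^+)^2$ combined with Cauchy--Schwarz against $\E[N_i^2]=O(k^2)$ --- this last step is exactly where the extra $\sqrt{\log k}$ in $\alpha$ is spent, matching your intuition. For the $\omega(1)$ classes the paper tracks total remaining \emph{work} $W_E=\sum_{j>I}N_j/\mu_j$ rather than job counts, which lets it bound the full response time of these low-priority classes; your ``intrinsic service time vanishes'' observation covers only the service component, and you would still need an argument of this kind to show their queueing delay under LPF is also $o(1)$.
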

We will prove Theorem \ref{thm:if:opt} via Lemmas \ref{lem:shw:inelastic}, \ref{lem:shw:in:opt}, and \ref{if:shw:elastic:opt}.
We first outline our argument at a high level.
Then, we prove the necessary lemmas before proving Theorem \ref{thm:if:opt}.

We will prove that, for each class of jobs $i$, the mean response time of class $i$ jobs under LPF approaches some lower bound as $k \rightarrow \infty$, and thus the overall mean response time under LPF is asymptotically optimal.
When analyzing LPF in this section, we will omit the LPF superscript for the sake of readability.

For any class, $i$, it is desirable to have more than $\rho_i k$ servers working on class $i$ jobs.
In this case, class $i$ jobs are being completed at a rate greater than $\mu_i \rho_i k = \lambdak_i$, and hence the number of class $i$ jobs in the system will tend to decrease.
Given a system with $\alpha$ spare servers, it is theoretically possible to guarantee that $\rho_i k + \frac{\alpha}{\ell}$ servers are available to class $i$ jobs for all classes, $i$.
Our goal is to prove that LPF guarantees this property with high probability, and therefore it is usually possible to complete class $i$ jobs at a very fast rate.
We then argue that this is sufficient to keep the average number of class $i$ jobs in the system very low.

We divide the $\ell$ job classes into two cases.
First, we consider all the job classes whose level of parallelism, $c_i=O(1)$.
We call these \emph{inelastic} (I) jobs, because their level of parallelism does not scale with $k$.
Second, we consider all the job classes whose level of parallelism $c_i=\omega(1)$.
We refer to these as \emph{elastic} (E) jobs.

To handle any inelastic job class, $i$, we argue in Lemma \ref{lem:shw:inelastic} that the probability of having a large number of jobs belonging to the first $i-1$ job classes goes to 0 very quickly as $k \to \infty$.
As a result, LPF almost always devotes a sufficient number to the class $i$ jobs.
This allows us to show in Lemma \ref{lem:shw:in:opt} that the mean response time of class $i$ jobs is asymptotically optimal.

We use a similar argument to prove the asymptotic optimality for elastic jobs in Lemma \ref{if:shw:elastic:opt}.
However, we can prove this result for all elastic classes simultaneously instead of treating the classes one by one as we did for the inelastic classes.
The proof of Theorem \ref{thm:if:opt} follows easily from Lemmas \ref{lem:shw:in:opt} and \ref{if:shw:elastic:opt}.

The key to proving these lemmas is that we have $\frac{\alpha}{\ell}$ spare servers to devote to each class.
Roughly, this means the LPF policy can tolerate fluctuations in the number of class $i$ jobs on the order of $\alpha$ without having these jobs interfere significantly with the service of lower priority jobs.
In \shw scaling regimes, we assume $\alpha=\omega(\sqrt{k\log k})$.
Hence, our proofs use drift arguments to recover a well-known form of result --- that fluctuations in the number of class $i$ jobs larger than $O(\sqrt{k})$ are rare as $k\to\infty$.

We define $\beta_i$ to be the smallest number of class $i$ jobs that can fully utilize $\rho_i k + \frac{\alpha}{\ell}$ servers, giving
$$\beta_i = \Bigl\lceil\frac{\rho_i k + \frac{\alpha}{\ell}}{c_i}\Bigr\rceil.$$
We begin with the following lemma that the number of class $i$ jobs rarely exceeds $\beta_i$ for any inelastic class.

\begin{restatable}{lemma}{shw-inelastic-tail}\label{lem:shw:inelastic}
Let $I$ be the highest number such that $c_I=O(1)$.
For any class of jobs, $i\leq I$, let $\Eps_i$ be the set of states defined as
$$\Eps_i=\{(n_1,n_2, \ldots n_\ell)\colon n_j<\beta_j,\forall 1\leq j \leq i\}.$$
In \shw scaling regimes,
\begin{equation}\Pr((N_1, N_2, \ldots, N_\ell) \notin \Eps_i) = o\left(\frac{1}{k^3}\right) \quad\forall i\leq I.\label{eq:shw}\end{equation}
\end{restatable}
\begin{proof}
We will prove \eqref{eq:shw} by induction on the job class, $i$.
Specifically, we will prove the following two properties by simultaneous induction.
First, we will show there exist constants $\const{d}{i}{j}$ and $\const{f}{i}{j}>0$ such that
\begin{equation}\Pr(N_i \geq \beta_i) = \sum_{j=1}^i \const{f}{i}{j}\left(\frac{k}{\alpha}\right)^{j-1}e^{-\const{d}{i}{j}\alpha^2/k} \quad \forall i\leq I.\label{eq:shw:pt1}\end{equation}
Second, for some constants $\const{g}{i}{j}>0$ and $\const{h}{i}{j}>0$, we will show that
\begin{equation}\Pr((N_1, N_2, \ldots, N_\ell) \notin \Eps_i) \leq \sum_{j=1}^i \const{g}{i}{j}\left(\frac{k}{\alpha}\right)^{j-1}e^{-\const{h}{i}{j}\alpha^2/k} \quad \forall i\leq I.\label{eq:shw:pt2}\end{equation}
Once we have established (2) and (3), we note $\alpha=\omega\left(\sqrt{k\log k}\right)$ in the \shw scaling regimes and thus $\frac{\alpha^2}{k} = \omega\left(\log k\right)$.
Note that, for any constant $x>0$, $\frac{\alpha^2}{k} = \omega\left( x \log k\right) = \omega\left(\log k^x \right)$.
Hence, by setting $x=\frac{j+2}{\const{h}{i}{j}}$, \eqref{eq:shw:pt2} implies
$$\const{g}{i}{j}\frac{k}{\alpha}^{j-1}\cdot e^{-\const{h}{i}{j}\alpha^2/k} = o\left(\frac{1}{k^3}\right)$$
for any $i,j \leq I$, as desired. \footnote{This is where we use the $\sqrt{\log k}$ separation from the Halfin-Whitt regime in our argument about the \shw regime.}
We require \eqref{eq:shw:pt1} as part of our inductive proof.

Our inductive proof of (2) and (3) relies on the drift bounds established in \cite{wang2022heavy}.
These bounds require one to define a Lyapunov function $V(x)$, that takes a system state $x=(n_1,n_2, \ldots n_\ell)$ and returns a real number.
The goal is to choose a Lyapunov function such that, for a state $x$, the instantaneous rate of change of $\E[V(X) \mid X=x]$, known as the \emph{Lyapunov drift},  is almost always negative.
Specifically, if the instantaneous rate of change of $\E[V(X) \mid X=x]$ is negative whenever $V(x) > B$, \cite{wang2022heavy} provides bounds on $\Pr(V(X) > B)$ (see Appendix \ref{sec:driftbound} for a precise statement).
We set $V(x)=n_1$ to bound $\Pr(N_1 > B)$ and prove the base case for our induction ($i=1$).
To prove our claims inductively for class $i+1$ jobs, we set $V(x)=n_{i+1}$.
Here, we must account for the fact that there may be many lower-class jobs in the system preventing the class $i+1$ jobs from running and causing positive drift, even when $V(x)$ is large.
To handle this issue, we exploit a stronger drift bound (see Appendix \ref{sec:app:mod} for a precise statement) that tolerates some positive drift as long as the states with positive drift occur with bounded probability.
The problematic states in this scenario are exactly $\{x \notin \Eps_{i}\}$.
Our inductive hypothesis holds that $\Pr(X\notin \Eps_{i})$ is small, resulting in the desired bounds on $\Pr(N_{i+1} > B)$ and $\Pr(X\notin \Eps_{i+1})$.

To prove both statements when $i=1$, we define the Lyapunov function $V(x)=n_1$.
When $n_1 \geq \rho_1 k + \frac{\alpha}{2\ell}$, the drift of the Lyapunov function, $\Delta V$, is
$$\Delta V(x) = \lambdak_1 - \mu_1 \cdot n_1 \leq \lambdak_1 - \mu_1 (\rho_I k + \frac{\alpha}{2\ell}) = -\mu_1 \frac{\alpha}{2\ell}.$$
Hence, the drift bound from Appendix \ref{sec:driftbound} yields
\begin{align*}
Pr(N_1 \geq \beta_1) \leq Pr\left(N_1 \geq \rho_1 k + \frac{\alpha}{\ell}\right) \leq \const{f}{1}{1}e^{-\const{d}{1}{1} \alpha^2/k}
\end{align*}
for some constant $\const{d}{1}{1}>0$ and $\const{f}{1}{1}=1$.
This proves both \eqref{eq:shw:pt1} and \eqref{eq:shw:pt2} when $i=1$.

Now, we inductively assume that \eqref{eq:shw:pt1} and \eqref{eq:shw:pt2} hold for the first $i$ job classes.
If the Lyapunov function $V(x)=n_{i+1} > \rho_{i+1}k + \frac{\alpha}{2\ell}$, the drift is positive if and only if $x\notin\Eps_{i}$, otherwise the drift is negative.
We can therefore use the modified drift bound from Appendix \ref{sec:app:mod} to get the bound
$$\Pr(N_{i+1} \geq \beta_{i+1}) \leq \const{f}{i+1}{i+1}e^{-\const{d}{i+1}{i+1}\alpha^2/k} + O\left(\frac{k}{\alpha}\right)Pr(X \notin \Eps_i)$$
for some constants $\const{d}{i+1}{i+1}>0$ and $\const{f}{i+1}{i+1}>0$.

Using our inductive hypothesis to bound $\Pr(X \notin \Eps_i)$, we see that there exist some new constants $\const{f}{i+1}{j}$ and $\const{d}{i+1}{j}$ for all $1\leq j \leq i+1$ such that \eqref{eq:shw:pt1} holds for the $i+1$st class of jobs.

Finally, to prove \eqref{eq:shw:pt2} for the $i+1$st class, we note that
$$\Pr(X \notin \Eps_{i+1}) = Pr\left(\bigcup_{j=1}^{i+1} N_j > \beta_j\right) \leq Pr(X \notin \Eps_{i}) + Pr(N_j > \beta_{i+1}) \leq \sum_{j=1}^{i+1} \const{g}{i+1}{j}\left(\frac{k}{\alpha}\right)^{j-1}e^{-\const{h}{i+1}{j}\alpha^2/k}$$
for some new set of constants $\const{g}{i+1}{j}>0$ and $\const{h}{i+1}{j}>0$.
This completes the proof by induction.
\end{proof}
We now use Lemma \ref{lem:shw:inelastic} to prove the optimality of the inelastic response times via a rate conservation argument.

\begin{restatable}{lemma}{shw-inelastic-opt}\label{lem:shw:in:opt}
For any class $i\leq I$, if $\alpha=\omega(\sqrt{k\log k})$, 
then
$$\kinfty \ET[i]{LPF} = \frac{1}{c_i\mu_i}.$$
\end{restatable}
\begin{proof}
Let $K_i$ be the number of servers that process class $i$ jobs under LPF in a random state $X=(N_1, N_2, \ldots, N_i)$ in the stationary regime. 
Let $\theta_i =  \lceil \frac{\rho_i k + \log k}{c_i} \rceil$ be the number of class $i$ jobs required to use slightly more than the average number of class $i$ servers.  The number of servers used by $\theta_i$ class $i$ jobs is at most $\rho_i k + \log k + c_i$ due to rounding.
We also define $\delta_i=(N_i-\theta_i)^+$, where $(\cdot)^+$ denotes $\max(\cdot,0)$ (i.e. the \emph{positive part}).

Intuitively, while class $i$ jobs use $\rho_i k$ servers on average, Lemma \ref{lem:shw:inelastic} suggests that it should be rare for the class $i$ jobs to use $\rho_i k + \log k$ servers.
We will use Lemma \ref{lem:shw:inelastic} to show that $\E[\delta_i] \rightarrow 0$ and hence $\E[N_i]$ is at most $\theta_i$ as $k$ becomes large.  
We then apply Little's law to this bound to finish our proof.

Formally, we will make a rate conservation argument using the Lyapunov function $V(x)=((n_i - \theta_i)^+)^2$ so that $V(X)=\delta_i^2$.
Note that  $\delta_i$ does not change for small values of $N_i$.  If an arrival occurs when there are fewer than $\theta_i$ jobs, nothing changes.  Similarly, unless there are strictly more than $\theta_i$ jobs, nothing changes on a departure either.  This gives:
\begin{align*}
\1{N_i \geq \theta_i} \lambdak_i (2\delta_i +1) + \1{N_i > \theta_i}\mu_i K_i ( -2\delta_i +1) &= 0\\
\1{N_i \geq \theta_i} \lambdak_i 2\delta_i + \1{N_i \geq \theta_i} \lambdak_i + \1{N_i > \theta_i}\mu_i K_i &= \1{N_i > \theta_i}\mu_i K_i 2\delta_i\\
2 \1{N_i \geq \theta_i} \lambdak_i \delta_i + \1{N_i \geq \theta_i} \lambdak_i + \1{N_i > \theta_i}\mu_i K_i &= 2\1{N_i \geq \theta_i}\mu_i K_i \delta_i
\end{align*}
Taking expectations then gives
\begin{align}
\E[2\1{N_i \geq \theta_i} \lambdak_i \delta_i + \1{N_i \geq \theta_i} \lambdak_i + \1{N_i > \theta_i}\mu_i K_i] = \E[2\1{N_i \geq \theta_i}\mu_i K_i \delta_i ]\\
\E[\1{N_i \geq \theta_i} \lambdak_i \delta_i + \1{N_i \geq \theta_i} \lambdak_i] = \E[\1{N_i \geq \theta_i}\mu_i K_i \delta_i ]\label{eq:arrivals}
\end{align}
where \eqref{eq:arrivals} used the fact that $\E[\1{N_i > \theta_i}\mu_i K_i] = \E[\1{N_i \geq \theta_i}\lambdak_i]$.  This is easily shown using a rate conservation argument with $V(x)=\delta_i$, which yields that $\1{N_i \geq \theta_i}\lambdak_i - \1{N_i > \theta_i}\mu_i K_i = 0$ in stationarity.
Now, noting that $\1{N_i \geq \theta_i}\delta_i = \delta_i$, we have
\begin{align}
\lambdak_i Pr(N_i \geq \theta_i)  + \lambdak_i \E[\delta_i]  = \mu_i \E[K_i \delta_i].\label{eq:rate}
\end{align}

If $K_i$ and $\delta_i$ were independent, $\mu_i\E[K_i\delta_i]$ would equal $\lambdak_i \E[\delta_i]$ since we know the long run completion rate of type $i$ jobs equals the long run arrival rate of type $i$ jobs in a stable system.
Intuitively, however, $\delta_i$ and $K_i$ should be positively correlated --- we should expect to use more servers on type $i$ jobs when more type $i$ jobs are in the system.
Hence, to turn \eqref{eq:rate} into an upper bound on $\E[\delta_i]$, we will lower bound $\E[K_i\delta_i]$ to be significantly larger than  $\lambdak_i \E[\delta_i]$.
$K_i$ and $\delta_i$ only move inversely when many lower-class jobs enter the system and prevent class $i$ jobs from running.
Hence, we can use our upper bound on $\Pr(X\notin \Eps_{i-1})$ to argue that $K_i$ and $\delta_i$ are positively correlated and thus
\begin{equation}\E[K_i\delta_i] \geq (\E[K_i] + \log k)\E[\delta_i] - O(1) = (\rho_i k + \log k)\E[\delta_i] - O(1), \quad \forall i\leq I.\label{eq:product}\end{equation}
This is shown formally in Appendix \ref{sec:momentbound}.
Plugging \eqref{eq:product} into \eqref{eq:rate} yields
\begin{align}
\lambdak_i + O(1) &\geq (\mu_i(k\rho_i + \log k))  - \lambda_i)\E[\delta_i]= \mu_i \log k\E[\delta_i]\\
\frac{\lambdak_i + O(1)}{\mu_i \log k} &\geq \E[\delta_i]
\end{align}
for sufficiently large $k$.
Hence,
\begin{align}
\E[N_i] &\leq \E[\delta_i] + \theta_i \leq o(k) + \theta_i\\
\E[N_i] &\leq o(k) + \frac{\rho_i k }{c_i}.
\end{align}
Applying Little's law gives
$$\E[T_i] = o(1) +\frac{1}{c_i \mu_i}.$$ 
Hence, for the first $I$ classes, the mean response time of each class converges to a service time plus an additive term that goes to 0 as $k \rightarrow \infty$.
Note that this mean service time represents the lowest possible mean service time for a class $i$ job, and is thus a lower bound on the mean response time for class $i$ jobs.
\end{proof}

At this point, we have shown that the I jobs have asymptotically optimal mean response time in \shw regimes.
Now, we handle the remaining classes whose level of parallelizability scales with $k$.
\begin{restatable}{lemma}{if-shw-elastic-opt}\label{if:shw:elastic:opt}
For any class $i>I$, if $\alpha=\omega(\sqrt{k\log k})$, then
$$\kinfty \ET[i]{LPF} = 0.$$
\end{restatable}
\begin{proof}
Recall that $c_i= \omega(1)$ for all classes of elastic (E)  jobs.
We will bound the response time of all elastic classes simultaneously.
Let
$$\rho_E = \sum_{j=I+1}^{\ell} \rho_j \mand \lambdak_E = \sum_{j=I+1}^{\ell} \lambdak_j.$$
Let $K_E$ be the combined number of servers used on any elastic job, and let $\mu_{min} = \min_{i\in[I+1,\ell]} \mu_i$.
We will bound $\E[N_E]$, the expected number of total jobs of class $I+1$ or higher.
Following our earlier argument, we want to exploit our bound on $\Eps_{I}$ to get an upper bound on $\E[N_E]$.

Rather than tracking the number of jobs directly, we will track the expected sum of the sizes of elastic jobs in the system.
We refer to this quantity as the expected \emph{work} from elastic jobs, $W_E$.
This quantity can be written as 
$$W_E = \sum_{j=I+1}^{\ell} \frac{N_j}{\mu_j}.$$
Let $\theta_E = \frac{\rho_E k + \log k}{c_{I+1} \mu_{min}}$ and let $\delta_E = (W_E - \theta_E)^+$.
Furthermore, let $\delta_E(x)$ be the value of $\delta_E$ given that the system is in state $x$.

Note that we divide by $c_{I+1}$ to be pessimistic about how many servers each job uses.
We will develop a bound on $\E[\delta_E]$ and use this bound to recover a bound on $\E[N_E]$.
The key to this lemma is that $\theta_E=o(k)$ because $c_{I+1}=\omega(1)$.
Hence, we can stand to pick up a $\theta_E$ term in our bound on $\E[N_E]$, and this term will be dominated by $\lambdak_E$ when applying Little's law.

We will now use a rate conservation argument with the Lyapunov function $V(x)=\left(\delta_E(x)\right)^2$.  This yields
$$\1{W_E\geq\theta_E}\left(\sum_{j={I+1}}^{\ell} \frac{2\lambdak_j}{\mu_j}\delta_E + \frac{\lambdak_j}{\mu^2_{j}}\right) + \1{W_E>\theta_E}\left(\sum_{j={I+1}}^{\ell} -2k_j\delta_E + \frac{k_j}{\mu_j}\right) = 0.$$
Taking expectations, we get
\begin{align}
\E[\1{W_E\geq\theta_E}\delta_E k\rho_E] + O(k) &= \E[\1{W_E\geq\theta_E}\delta_EK_E]\\
k\rho_E\E[\delta_E] + O(k) &= \E[\delta_EK_E]\label{eq:econs}
\end{align}
where the indicators on both sides vanish because they are redundant, and the terms without $\delta_E$ are aggregated into the $O(k)$.
Hence, lower bounding $\E[\delta_EK_E]$ will give an upper bound on $\E[\delta_E]$.

Our argument now follows the proof of \eqref{eq:product} (see Appendix \ref{sec:momentbound}) to show that, for sufficiently large $k$,
$$\E[\delta_EK_E] \geq (\rho_E k + \log k) \E[\delta_E] - O(1).$$
Adapting Lemma \ref{lem:lb} to prove this claim is straightforward and makes use of our bound on $\Pr(X\notin\Eps_I)$.



Plugging this lower bound into \eqref{eq:econs} gives
\begin{align}
O(k) + O(1) &\geq (\rho_E k + \log k)\E[\delta_E] - \rho_E k \E[\delta_E]\\
\frac{O(k) + O(1)}{\log k} &\geq \E[\delta_E]\\
o(k) &= \E[\delta_E].
\end{align}

To conclude, we note that 
$$\E[W_E] \leq \E[\delta_E] + \E[\theta_E] = \E[\delta_E] + o(k).$$
Let $\mu_{max} = \max_{i\in[I+1,\ell]}\mu_i$. 
We then have
$$\E[N_E] = \sum_{j=I+1}^{\ell} \E[N_j] \leq \sum_{j=I+1}^{\ell} \frac{\E[N_j]\mu_{max}}{\mu_j} = \mu_{max}\E[W_E].$$
Hence,
$$\E[N_E] \leq \mu_{max} (\E[\delta_E] + o(k)) = o(k).$$
Applying Little's Law, we have
$$\kinfty \E[T_E] = \kinfty \frac{\E[N_E]}{\lambdak_E} = 0.$$
This implies our claim, that the mean response time of class $i$ jobs goes to 0 for all $i>I$.
\end{proof}

We now combine the above lemmas to complete our proof of Theorem \ref{thm:if:opt}.
\begin{proof}[Proof of Theorem \ref{thm:if:opt}]
By Lemmas \ref{lem:shw:in:opt} and \ref{if:shw:elastic:opt}, for any class, $i$, $\ET[i]{LPF}$ converges to a lower bound as $k$ becomes large, since the mean response time of a job is lower bounded by its mean service time.  Hence, the overall mean response time across jobs, which is a linear combination of the mean response times for each class, also converges to a lower bound.  For any policy, $\pi$, this gives
$$\kinfty\frac{\ET{\pi}}{\ET{LPF}}\geq 1,$$
and thus LPF is optimal in \shw scaling regimes.
\end{proof}

We conclude this section with Theorem \ref{thm:srptsub}, which shows that SERPT is asymptotically suboptimal in \shw scaling regimes.
That is, LPF outperforms SERPT (sometimes significantly) in \shw scaling regimes.

\begin{restatable}{theorem}{srptsub}\label{thm:srptsub}
There exists a set of job classes such that, when $\alpha = \omega(\sqrt{k\log{k}})$,
$$\kinfty \frac{\ET{\SERPT}}{\ET{\LPF}} > 1.$$
That is, SERPT can be strictly suboptimal in the \shw regime.
\end{restatable}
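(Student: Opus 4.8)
The plan is to exhibit a single family of job classes, together with a \shw scaling, on which \SERPT is strictly beaten by \LPF; since the statement only asks for \emph{some} set of classes, one clean witness suffices. Take $\ell = 2$: class $1$ with $c_1 = 1$ and rate $\mu_1$, and class $2$ with $c_2 = k$ (embarrassingly parallel) and rate $\mu_2 > \mu_1$; fix proportions $p_1, p_2 > 0$ and let $\lambdak = \Theta(k)$ scale so that $\rho \to \rho^* \in (0,1)$ with $\rho_1^*, \rho_2^* > 0$. Since $\rho^* < 1$ we have $\alpha = k(1-\rho) = \Theta(k) = \omega(\sqrt{k\log k})$, so this is a \shw regime. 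Here $I = 1$, so Lemma~\ref{lem:shw:in:opt} gives $\kinfty \ET[1]{\LPF} = 1/\mu_1$ and Lemma~\ref{if:shw:elastic:opt} gives $\kinfty \ET[2]{\LPF} = 0$, whence $\kinfty \ET{\LPF} = p_1/\mu_1$, a positive constant. The whole argument then reduces to showing that \SERPT inflates the mean response time of the non-parallelizable class $1$ by a constant factor.

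Next I would analyze \SERPT on this instance. Because $\mu_2 > \mu_1$, every class $2$ job has strictly smaller expected remaining size than every class $1$ job, so \SERPT gives strict preemptive priority to class $2$. Since $c_2 = k$, a single class $2$ job can occupy all $k$ servers, so whenever $N_2 \geq 1$ the whole machine serves class $2$ and class $1$ receives nothing. Consequently $N_2$ evolves as an \emph{autonomous} $M/M/1$ with arrival rate $\lambdak_2$ and service rate $k\mu_2$: its dynamics do not depend on class $1$, $\Pr(N_2 = 0) = 1 - \rho_2$, and $\kinfty \ET[2]{\SERPT} = \kinfty 1/(k\mu_2(1-\rho_2)) = 0$. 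The interesting quantity is $\E[N_1^{\SERPT}]$, and class $1$ is served only during the idle periods of this $M/M/1$. A rate-conservation argument on class-$1$ work (work enters at rate $\lambdak_1/\mu_1 = \rho_1 k$ and leaves at the rate equal to the expected number of busy class-$1$ servers, $\min(N_1,k)$, restricted to the event $\{N_2 = 0\}$) yields the exact identity $\rho_1 k = \E[\min(N_1,k)\1{N_2=0}]$.

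To finish I would lower-bound $\E[N_1^{\SERPT}]$. Using $\min(N_1,k) \le N_1$ in the identity gives $\rho_1 k \le \E[N_1 \1{N_2=0}]$. The crux is to decouple the two factors: because class $1$ drains precisely on $\{N_2 = 0\}$ and only accumulates on $\{N_2 \ge 1\}$, the backlog $N_1$ is stochastically no larger on $\{N_2 = 0\}$ than unconditionally, so that $\E[N_1 \1{N_2=0}] \le (1-\rho_2)\E[N_1] + o(k)$. Plugging this in gives $\E[N_1^{\SERPT}] \ge \frac{\rho_1}{1-\rho_2}k\,(1-o(1))$, hence by Little's law $\ET[1]{\SERPT} \ge \frac{1}{\mu_1(1-\rho_2)}(1-o(1))$. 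Combining with $\kinfty \ET[2]{\SERPT} = 0$ we get $\kinfty \ET{\SERPT} \ge \frac{p_1}{\mu_1(1-\rho_2^*)}$, and therefore $\kinfty \ET{\SERPT}/\ET{\LPF} \ge \frac{1}{1-\rho_2^*} > 1$, which proves the claim.

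The main obstacle is the decoupling step $\E[N_1 \1{N_2=0}] \le (1-\rho_2)\E[N_1] + o(k)$. This is exactly the negative correlation between the class-$1$ backlog and the event that the preempting class is empty, and it is what forces $N_1 = \Theta(k)$: to clear load $\rho_1 k$ while being served only a $(1-\rho_2)$ fraction of the time, class $1$ must keep roughly $\rho_1 k/(1-\rho_2)$ jobs in system. Making it rigorous requires either a monotonicity/coupling argument showing $\E[N_1 \mid N_2 = 0] \le \E[N_1]$, or a time-scale-separation estimate exploiting that the autonomous class-$2$ chain mixes on the $\Theta(1/k)$ scale while $N_1$ lives on the $\Theta(1)$ scale, so that $\mathrm{Cov}(N_1, \1{N_2=0}) = o(k)$; analogous drift-based second-moment bounds (as used in the proof of Lemma~\ref{lem:lb}) should control the resulting error terms. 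I would not expect the remaining ingredients — the \LPF value, the autonomy of class $2$, and the rate-conservation identity — to present real difficulty.
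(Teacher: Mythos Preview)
Your counterexample ($\ell=2$, $c_1=1$, $c_2=k$, $\mu_2>\mu_1$) and the LPF side of the argument match the paper exactly, but your route to a lower bound on $\ET[1]{\SERPT}$ is genuinely different and heavier than necessary. The paper sidesteps your decoupling obstacle entirely with a direct tagged-job argument: a fixed class-$1$ job needs $S_1$ units of actual service, and during that service time class-$2$ jobs arrive as a Poisson process of rate $\lambdak_2$, each one preempting all $k$ servers for at least $S_2/k$; summing these interruptions gives $\ET[1]{\SERPT} \ge \E[S_1] + \lambdak_2\,\E[S_1]\cdot\E[S_2]/k = (1+\rho_2)/\mu_1$, hence $\kinfty \ET[1]{\SERPT} \ge (1+\rho_2^*)/\mu_1 > 1/\mu_1$. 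This is essentially a two-line calculation, requires no correlation or time-scale estimate, and works uniformly across all \shw regimes rather than only the mean-field scaling you picked.

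Your rate-conservation approach, \emph{if} the decoupling step $\E[N_1\1{N_2=0}] \le (1-\rho_2)\E[N_1] + o(k)$ were established, would yield the strictly sharper constant $1/(\mu_1(1-\rho_2^*))$, which captures the full geometric effect of repeated class-$2$ preemptions rather than just the first round the paper counts. But that step is real work: the monotonicity $\E[N_1\mid N_2=0]\le\E[N_1]$ is intuitive but not immediate to prove for this coupled chain, and the time-scale-separation route would need a mixing bound on the autonomous class-$2$ $M/M/1$ combined with a second-moment bound on $N_1$. Since the theorem only asks for the ratio to exceed $1$, the paper's cruder bound is the economical choice; your approach buys a tighter constant at the cost of a non-trivial lemma you have correctly flagged as the main obstacle.
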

\begin{proof}
Consider the case where $\ell=2$ with $c_1=1$, $c_2=k$, and $\mu_2 > \mu_1$.
We can easily see that the mean response time of class 2 jobs goes to 0 under SERPT in the \shw regime.
Hence, to prove the suboptimality of SERPT, we must show that $\kinfty \E\left[T_1^{\LPF}\right] < \kinfty \E\left[T_1^{\SERPT}\right]$.
Lemma \ref{lem:shw:in:opt} shows that, under LPF in the \bhw regime, the mean response time of a class 1 job converges to $\frac{1}{\mu_1}$.
We will show that SERPT does not attain this lower bound on the mean response time of class 1 jobs in the \bhw regime.

We consider the case where $\mu_2 > \mu_1$, so SERPT gives strict priority to class 2 jobs.
Consider the behavior of one particular class 1 job.
This class 1 job will be served for some random amount of time, $S_1$.
The queueing time of the job is therefore \emph{at least} the amount of time spent waiting behind class 2 jobs that arrive during its $S_1$ time in service.
We define $A^{(2)}_{S_1}$ to be the number of class 2 arrivals during a (random) class 1 service time.
We can then write
$$\E\left[T_1^{\SERPT}\right] \geq \E\left[\sum_{i=1}^{A^{(2)}_{S_1}} \frac{S_{2_i}}{k}\right] + \E[S_1].$$
Clearly, each $S_{2_i}$ is i.i.d, and also independent of $A^{(2)}_{S_1}$, so we have
$$\E\left[T_1^{\SERPT}\right] \geq \E\left[A^{(2)}_{S_1}\right]\E\left[\frac{S_2}{k}\right] + \E[S_1].$$
We can then see that 
\begin{align*}
\E\left[A^{(2)}_{S_1}\right] &= \int_0^\infty \mu_1 e^{-\mu_1 t}\E\left[A^{(2)}_{S_1} \mid S_1=t\right]dt = \int_0^\infty  t\cdot\mu_1 e^{-\mu_1 t}\cdot\lambdak_2 dt=\lambdak_2\E[S_1].
\end{align*}
Thus,
$$\E\left[T_1^{\SERPT}\right] \geq \frac{\lambdak_2}{k}\E[S_2]\E[S_1] + \E[S_1] = \left(\rho_2 + 1\right)\E[S_1]$$
and 
$$\kinfty \E\left[T_1^{\SERPT}\right] \geq \frac{(1+\rho_2^*)}{\mu_1} > \kinfty\E\left[T_1^{\LPF}\right].$$

Because SERPT achieves suboptimal mean response time for class 1 jobs as $k \rightarrow \infty$, the overall mean response time under SERPT is suboptimal in \shw scaling regimes.
\end{proof}


\section{Scheduling Parallelizable Jobs in \snds Regimes}\label{sec:snds}

LPF performs optimally in \shw scaling regimes because it defers parallelizable work in order to keep servers occupied.
SERPT, on the other hand, performs suboptimally in these regimes.
In \snds scaling regimes, this trend is reversed.
Namely, in \snds scaling regimes, the mean queueing time becomes the dominant term in the mean response time.
Because SERPT minimizes mean queueing time by prioritizing short jobs ahead of long jobs, we can show it is asymptotically optimal in this case.
Not only does LPF fail to minimize mean queueing time, but the benefits of deferring parallelizable work are diminished in \snds regimes.
Here, there are usually enough jobs in the system to keep all servers occupied, even without deferring parallelizable work.
Hence, we show that LPF is suboptimal in \snds scaling regimes.
Additionally, we will show that both results also apply in the conventional heavy traffic limit, where $k$ is fixed and $\rho \to 1$.

The point of this section will be to prove Theorems~\ref{thm:conventional},\ref{thm:snds}, and \ref{thm:ifsubopt}, which state that SERPT is optimal in conventional heavy traffic and \snds scaling regimes, while LPF can be suboptimal in these cases.  We start by proving a preliminary lemma that will be used repeatedly in our proofs.  

This preliminary lemma, Lemma \ref{lem:conventional:number}, relates the mean number of jobs in a SERPT system to a lower bound on the optimal mean number of jobs in system.
Specifically, we consider the lower bound of an optimal policy for a {\em single-server} preemptive system that runs at speed $k$.
The optimal policy for this single-server system serves jobs one at a time in SERPT order \cite{van1995dynamic}.
We will refer to the optimal policy for the single-server system as SERPT-1, to emphasize that this is SERPT applied to a system with one fast server.
The SERPT policy (without the 1) will continue to denote SERPT in our $k$-server system.

To describe the priority ordering of classes under SERPT and SERPT-1, let $\gamma=(\gamma_1, \gamma_2, \ldots, \gamma_\ell)$ be a sequence such that
$$\mu_{\gamma_i} \geq \mu_{\gamma_{i+1}} \quad \forall 1\leq i < \ell.$$

We now compare the expected number of jobs under SERPT in our $k$-server system to the expected number of jobs in the SERPT-1 system.



\begin{restatable}{lemma}{conventional-number}\label{lem:conventional:number}
$$\EN{\SERPTONE} \leq \EN{\SERPT} \leq \EN{\SERPTONE} + k\cdot \ell.$$
\end{restatable}
\begin{proof}
The first part of our claim, that $\EN{\SERPTONE} \leq \EN{\SERPT}$, holds because
SERPT-1 is optimal for the single-server system.  Any policy from the $k$-server system can be mimicked by the single-server system, so 
SERPT-1 must provide a lower bound on the optimal policy for the $k$-server system.

The rest of this proof is devoted to proving the claim that $\EN{\SERPT} \leq \EN{\SERPTONE} + k\cdot\ell$.
To prove this, we will track the total work in the system, $W$, for each policy.
Specifically, let $W_i^{\pi}(t)$ be the total remaining size of all class $i$ jobs in a system using the scheduling policy $\pi$ at time $t$.
Furthermore, let $W_{\leq \gamma_i}^{\pi}(t)$ be the total remaining size of job \emph{in classes $\gamma_1$ through $\gamma_i$} at time $t$.
Finally, let $\Delta_{\leq \gamma_i}(t)$ be defined as
$$\Delta_{\leq \gamma_i}(t) := W_{\leq \gamma_{i}}^{\SERPT}(t) - W_{\leq \gamma_{i}}^{\SERPTONE}(t).$$

We can take expectations to get 
$$\EW[\leq \gamma_i]{\SERPT}= \EW[\leq \gamma_i]{\SERPTONE}+\E[\Delta_{\leq \gamma_i}].$$
Our goal is to get upper and lower bounds on $\E[\Delta_{\leq \gamma_i}]$, and then use these bounds on the work in system to get a bound on the expected number of jobs in system.

First, it is easy to see that $\E[\Delta_{\leq \gamma_i}]\geq 0$.
Consider any fixed arrival sequence of jobs and their associated sizes.
This sequence corresponds to one possible sample path for all the random variables in the system.
Assume for contradiction that, at some time $t$,
$$W_{\leq \gamma_{i}}^{\SERPT}(t) < W_{\leq \gamma_{i}}^{\SERPTONE}(t).$$
This implies that just before time $t$, the SERPT system completes work on classes $\gamma_1$ through $\gamma_i$ at a faster rate than the SERPT-1 system.
However, the SERPT-1 system always completes this work at the maximal rate unless $W_{\leq \gamma_{i}}^{\SERPTONE}(t)=0$.
This implies that
$$W_{\leq \gamma_{i}}^{\SERPT}(t) < 0,$$
a contradiction.
Because $W_{\leq \gamma_{i}}^{\SERPT}(t) \geq  W_{\leq \gamma_{i}}^{\SERPTONE}(t)$ at any time $t$, for any arrival sequence of jobs, we have
$$\EW[\leq \gamma_i]{\SERPT} \geq \EW[\leq \gamma_i]{\SERPTONE} \qquad \forall 1 \leq i \leq \ell.$$

To derive an upper bound, we again consider any fixed arrival sequence of jobs.
For any value of $i$, we partition this arrival sequence into ``non-busy'' intervals where the SERPT system has fewer than $k$ jobs in classes $\gamma_1$ through $\gamma_i$ and ``busy'' intervals where SERPT has more than $k$ jobs from these classes.
For any time $t_n$ in a non-busy interval, it is clear that $\Delta_{\leq \gamma_i}(t_n)$ is at most the total remaining size of the at most $k$ jobs in the SERPT system. 

We now take expectations over all possible sample paths, giving
$$\E[\Delta_{\leq \gamma_i}(t_n)] \leq \E\left[\sum_{j=1}^{ N^{\SERPT}_{\leq \gamma_i}(t_n)}S^{(j)}_{\gamma_i}\right] \leq k\E[S_{\gamma_i}]$$
where $\E[S_{\gamma_i}]$ is the expected size of a job from the class with the $i$th smallest expected job size.
We are exploiting memorylessness here --- the remaining size of each of the jobs in the SERPT system is exponentially distributed.
Furthermore, because the classes are ordered by their expected job sizes, the remaining size of each job is stochastically less than or equal to $S_{\gamma_i}$.

We now use a similar argument to handle any time $t_b$ in a busy interval.
For any time $t_b$, there exists some $t^*\leq t_b$ that marks the start of the busy interval.
At time $t^*$, $\Delta_{\leq \gamma_i}(t^*)$ is at most the total remaining size of the (exactly) $k$ jobs in the SERPT system.
In the interval $[t^*,t_b]$, the SERPT completes work at rate $k$, the maximal rate of work for both systems, so $\Delta_{\leq \gamma_i}(t_b)\leq \Delta_{\leq \gamma_i}(t^*)$.
We again take expectations to find that, for any time $t_b$, 
$$\E[\Delta_{\leq \gamma_i}(t_b)] \leq  k\E[S_{\gamma_i}].$$

Because our upper bound on $\Delta_{\leq \gamma_i}$ holds for any time $t\geq0$, we have
$$\E[\Delta_{\leq \gamma_i}] \leq  k\E[S_{\gamma_i}] \qquad \forall 1\leq i \leq \ell$$
and therefore
$$\EW[\leq \gamma_i]{\SERPT} \leq \EW[\leq \gamma_i]{\SERPTONE}+k\E[S_{\gamma_i}] \qquad \forall 1\leq i \leq \ell.$$

For class $\gamma_1$ jobs, we now immediately use our upper bound to get
$$\EW[\gamma_1]{\SERPT} \leq  \EW[\gamma_1]{\SERPTONE} + \frac{k}{\mu_{\gamma_1}}.$$
Because job sizes are exponential, we can divide this expression by the mean size of a class $\gamma_1$ job to get
$$\EN[\gamma_1]{\SERPT} \leq  \EN[\gamma_1]{\SERPTONE} + k.$$

For class $\gamma_i$ jobs where $1 < i \leq \ell$, we have
$$\EW[\leq \gamma_i]{\SERPT} \leq  \EW[\leq \gamma_i]{\SERPTONE} + \frac{k}{\mu_{\gamma_i}}\mand \EW[\leq \gamma_{i-1}]{\SERPT} \geq  \EW[\leq \gamma_{i-1}]{\SERPTONE}.$$
We subtract the latter inequality from the former to yield
\begin{align}
\EW[\leq \gamma_i]{\SERPT} - \EW[\leq \gamma_{i-1}]{\SERPT} &\leq \EW[\leq \gamma_i]{\SERPTONE} + \frac{k}{\mu_{\gamma_i}} - \EW[\leq \gamma_{i-1}]{\SERPTONE}\\
\EW[\gamma_i]{\SERPT} &\leq  \EW[\gamma_i]{\SERPTONE} + \frac{k}{\mu_{\gamma_i}}.
\end{align}

We again divide by the average size of a class $\gamma_i$ job to get
$$\EN[\gamma_i]{\SERPT} \leq  \EN[\gamma_i]{\SERPTONE} + k \quad \forall 1 \leq i \leq \ell \implies \EN{\SERPT} \leq \EN{\SERPTONE} + k\cdot \ell$$
as desired.
\end{proof}

We are now ready to use Lemma \ref{lem:conventional:number} to prove Theorems \ref{thm:conventional} and \ref{thm:snds}.
\begin{restatable}{theorem}{conventional}\label{thm:conventional}
Consider a system in the conventional heavy traffic limit as $\rho \rightarrow 1$.
For any policy, $\pi$,
$$\lim_{\rho \rightarrow 1} \frac{\ET{\pi}}{\ET{\SERPT}}\geq1.$$
That is, SERPT is asymptotically optimal in the conventional heavy traffic regime.
\end{restatable}
\begin{proof}
To establish the optimality of SERPT in conventional heavy traffic, we note that for any policy $\pi$, $\E[T^\pi] \geq \ET{\SERPTONE}$.
If we can show that
$$\lim_{\rho \rightarrow 1} \frac{\E[T^{\SERPT}]}{\ET{\SERPTONE}} = 1,$$
then
$$ \lim_{\rho \rightarrow 1} \frac{\E[T^{\pi}]}{\E[T^{\SERPT}]} 
= \lim_{\rho \rightarrow 1} \frac{\E[T^{\pi}]}{\E[T^{\SERPT}]} \cdot \lim_{\rho \rightarrow 1} \frac{\E[T^{\SERPT}]}{\ET{\SERPTONE}} =
\lim_{\rho \rightarrow 1} \frac{\E[T^{\pi}]}{\ET{\SERPTONE}} \geq 1.$$
Hence, we now show that the $\ET{\SERPT}$ converges to the $\ET{\SERPTONE}$ as $\rho \to 1$.

We apply Little's law to the result of Lemma \ref{lem:conventional:number}, which yields
\begin{equation}\ET{\SERPTONE} \leq \ET{\SERPT}  \leq \ET{\SERPTONE} + O(1).\label{eq:etbound}\end{equation}


The mean response time of the SERPT-1 system in conventional heavy traffic can be derived from standard results on priority queueing (e.g., (32.1) in \cite{harchol2013performance}), yielding
$$\rhoone\ET{\SERPTONE}=\Theta\left(\frac{1}{1-\rho}\right).$$
Combining these two results gives
$$\rhoone\frac{\E[T^{\SERPT}]}{\E[T^{\SERPTONE}]} \leq \rhoone\frac{\E[T^{\SERPTONE}] + O(1)}{\E[T^{\SERPTONE}]} = \rhoone1 + O\left(1-\rho\right) = 1.$$
Because the mean response time under SERPT converges to that of the SERPT-1 system, SERPT is optimal in conventional heavy traffic.
\end{proof}

We now extend our proof from conventional heavy traffic to cover all \snds scaling regimes.

\begin{restatable}{theorem}{thm-snds}\label{thm:snds}
If $\alpha=o(1)$, then for any policy, $\pi$,
$$\kinfty \frac{\ET{\pi}}{\ET{\SERPT}}\geq1.$$
That is, SERPT is asymptotically optimal in \snds scaling regimes.
\end{restatable}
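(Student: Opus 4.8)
The plan is to follow the proof of Theorem~\ref{thm:conventional} almost verbatim, adapting it to the regime where $k \to \infty$ and $\rho \to 1$ simultaneously. The lower-bound construction is regime-independent: a single preemptive server of speed $k$ can mimic any schedule run on the $k$ speed-1 servers, so its optimal mean response time lower-bounds $\ET{\pi}$ for every policy $\pi$, and on this single server the $c\mu$ rule is optimal \cite{van1995dynamic}. Hence it again suffices to compare SERPT against the $c\mu$ system and to prove $\kinfty \ET{SERPT}/\ET{c\mu} = 1$.

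The first step is to reuse Lemma~\ref{lem:conventional:number}. Its proof tracks the work in classes $\gamma_1,\dots,\gamma_i$ using memorylessness and a busy/non-busy interval decomposition, and nowhere relies on $k$ being constant; it therefore holds for the scaled system and gives
$$\EN{SERPT} \leq \EN{c\mu} + k\cdot\ell.$$
In conventional heavy traffic the additive $k\ell$ was $O(1)$, but here it grows with $k$, so the heart of the argument is to show that $\EN{c\mu}$ grows strictly faster than $k$, making the additive term asymptotically negligible.

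The key new step is therefore to establish $\EN{c\mu} = \omega(k)$ in the \snds regime. The cleanest route is through work conservation: since the $c\mu$ system is a single work-conserving server, its expected total remaining size equals that of any work-conserving discipline and, by the Pollaczek--Khinchine formula with server speed $k$, scales as $\Theta\!\left(\frac{\lambdak \E[S^2]}{k(1-\rho)}\right) = \Theta\!\left(\frac{k}{\alpha}\right)$. Since each exponential job of class $i$ contributes expected remaining size $1/\mu_i$, which is at least a positive constant, the expected number of jobs is bounded below by a constant times the expected work, so $\EN{c\mu} = \Omega(k/\alpha)$. Because $\alpha = o(1)$ in the \snds regime (and hence $\rho \to 1$), this is $\omega(k)$.

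Combining these facts yields
$$\frac{\EN{SERPT}}{\EN{c\mu}} \leq 1 + \frac{k\ell}{\EN{c\mu}} = 1 + \frac{k\ell}{\omega(k)} \longrightarrow 1,$$
and since SERPT and the $c\mu$ system share the arrival rate $\lambdak$, Little's law turns this into $\kinfty \ET{SERPT}/\ET{c\mu} = 1$. As the $c\mu$ system lower-bounds every policy, we conclude that for any $\pi$,
$$\kinfty \frac{\ET{\pi}}{\ET{SERPT}} \geq \kinfty \frac{\ET{c\mu}}{\ET{SERPT}} = 1.$$
I expect the third step---quantifying the growth $\EN{c\mu} = \omega(k)$ under \snds scaling---to be the main obstacle, since it requires a heavy-traffic occupancy bound for the scaled priority queue rather than the constant-order comparison that sufficed in conventional heavy traffic.
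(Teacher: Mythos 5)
Your proposal is correct and follows essentially the same route as the paper: reuse Lemma~\ref{lem:conventional:number} to get $\EN{SERPT} \leq \EN{c\mu} + k\ell$, then argue that this additive gap is negligible because the $c\mu$ system's occupancy blows up in the \snds regime. The only cosmetic difference is that you justify $\EN{c\mu}=\Omega(k/\alpha)=\omega(k)$ via work conservation and Pollaczek--Khinchine, whereas the paper states the equivalent fact $\ET{c\mu}=\Theta(1/\alpha)$ by citing standard priority-queue analysis; the two are interchangeable via Little's law.
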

\begin{proof}
This argument closely follows the proof of Theorem \ref{thm:conventional}.
We again apply Lemma \ref{lem:conventional:number} to get \eqref{eq:etbound}.
In \snds scaling regimes, we can again use a standard analysis of priority queueing systems to find that $\ET{\SERPTONE}=\Theta\left(\frac{1}{\alpha}\right)$.
This gives the analogous result that
$$\kinfty\frac{\E[T^{\SERPT}]}{\E[T^{\SERPTONE}]} \leq \kinfty\frac{\E[T^{\SERPTONE}] + O(1)}{\E[T^{\SERPTONE}]} = \kinfty 1 + O\left(\alpha\right) = 1.$$

Hence, SERPT is asymptotically optimal in \snds scaling regimes.
\end{proof}

To conclude this section, we show that LPF can fail to minimize mean queueing time, and can therefore be asymptotically worse than SERPT in both conventional heavy traffic and \snds scaling regimes.

\begin{restatable}{theorem}{ifsubopt}\label{thm:ifsubopt}
Consider a system with a constant number of servers, $k_0$, in conventional heavy traffic.
There exists a set of job classes such that 
$$\lim_{\rho \rightarrow 1} \frac{\ET{\LPF}}{\ET{\SERPT}} > 1.$$
Furthermore, in \snds scaling regimes where $\alpha=o(1)$,  there exists a set of job classes such that
$$\lim_{k \rightarrow \infty} \frac{\ET{\LPF}}{\ET{\SERPT}} > 1.$$
That is, LPF can be strictly suboptimal in conventional heavy traffic and \snds scaling regimes.
\end{restatable}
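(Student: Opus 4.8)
The plan is to argue by counterexample, mirroring the structure of Theorem~\ref{thm:srptsub} but reversing the roles of the two policies. First I would fix $\ell=2$ classes whose parallelizability ordering directly opposes their size ordering: class~1 is non-parallelizable ($c_1=1$) with large jobs (small rate $\mu_1$), and class~2 is fully parallelizable ($c_2=k_0$, or $c_2=k$ in the scaling case) with short jobs (rate $\mu_2>\mu_1$). On this instance LPF gives strict preemptive priority to the \emph{long}, non-parallelizable class~1, which is exactly the anti-$c\mu$ ordering, whereas the $c\mu$ rule (and hence SERPT) prioritizes the short class~2. Intuitively LPF therefore forces the short jobs to queue behind the long ones, which is precisely the mistake that becomes costly once queueing dominates the response time.

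Next I would reduce the claim to a single inequality against the $c\mu$ lower bound. By Theorems~\ref{thm:conventional} and~\ref{thm:snds} together with Lemma~\ref{lem:conventional:number}, SERPT matches the $c\mu$ system at leading order, so $\ET{SERPT}\sim\ET{c\mu}$, where $\ET{c\mu}=\Theta\!\left(\tfrac{1}{1-\rho}\right)$ in conventional heavy traffic and $\Theta\!\left(\tfrac{1}{\alpha}\right)$ in \snds regimes. It therefore suffices to show $\ET{LPF}\geq(1+\varepsilon)\,\ET{c\mu}$ for some fixed $\varepsilon>0$ as the limit is taken. To lower-bound LPF I would isolate the deprioritized short class. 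Under LPF, class~2 is served only by the servers left idle by class~1; since class~1 has absolute priority and each of its jobs uses a single server, it behaves as an autonomous $M/M/k$ subsystem of load $\rho_1$, so its expected server usage is $\rho_1 k$ and class~2 is left with effective capacity $k(1-\rho_1)=\rho_2 k+\alpha$ against a demand of $\rho_2 k$. Hence class~2's effective load tends to~$1$, and $\EN[2]{LPF}$ (thus $\ET[2]{LPF}$) grows at the same $\Theta(1/(1-\rho))$ (resp.\ $\Theta(1/\alpha)$) rate as the bottleneck, but with a strictly larger leading coefficient than the class~1 blow-up that dominates $\ET{c\mu}$.

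The hard part will be establishing that this multiplicative separation genuinely survives the limit rather than collapsing to an equality: I must compare the leading heavy-traffic coefficients of the two priority orderings and show that the $c\mu$ ordering's coefficient is \emph{strictly} smaller, which is exactly the statement that $c\mu$ is optimal at leading order so the reverse ordering imposed by LPF is strictly worse. The technical obstacle is that LPF runs on $k$ parallel servers with only partial parallelizability ($c_1=1$), so it is not literally a single-server priority queue; I would need to argue that, at leading order in heavy traffic, the LPF system reduces to a single speed-$k$ preemptive-priority queue under the anti-$c\mu$ ordering. The point making this reduction valid is that the bottleneck under LPF is the \emph{fully parallelizable} class~2, which absorbs all idle servers, so class~1's inability to use the full system speed perturbs only lower-order terms. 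This step is essentially a reversed analogue of the work-comparison in Lemma~\ref{lem:conventional:number}: instead of upper-bounding a work gap by $O(k)$, I bound the relevant class-2 work \emph{below} by an amount of order $1/(1-\rho)$. The \snds case ($\alpha=o(1)$, $c_2=k\to\infty$) then follows by the identical argument with $1/\alpha$ in place of $1/(1-\rho)$.
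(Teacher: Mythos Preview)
Your proposal is correct and follows essentially the paper's approach: the same two-class counterexample ($c_1=1$, $c_2=k$, $\mu_2>\mu_1$), the same reduction to the $c\mu$ lower bound via Theorems~\ref{thm:conventional} and~\ref{thm:snds}, and the same key step of collapsing $k$-server LPF to a single speed-$k$ anti-$c\mu$ priority queue through a work comparison in the spirit of Lemma~\ref{lem:conventional:number}. The paper makes this last step cleaner by naming the single-server system \LPFONE and showing via the coupling that $\ET{\LPF}$ and $\ET{\LPFONEM}$ differ by $O(1)$, so the problem reduces to comparing two single-server priority queues with opposite orderings; once you have that reduction, your detour through class~2's ``leftover capacity'' and the worry about class~1's limited parallelism perturbing leading-order terms become unnecessary.
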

\begin{proof}
Consider the case where $\ell=2$ with $c_1=1$, $c_2=k$, and $\mu_2 > \mu_1$.
We established that SERPT is optimal in conventional heavy traffic by showing that
$$\lim_{\rho \rightarrow 1} \frac{\ET{\SERPT}}{\ET{\SERPTONE}}=1.$$
Given this result, we can prove the first part of our claim by showing that
$$\lim_{\rho \rightarrow 1} \frac{\ET{\LPF}}{\ET{\SERPTONE}}>1 \implies \lim_{\rho \rightarrow 1} \frac{\ET{\LPF}}{\ET{\SERPT}} = \lim_{\rho \rightarrow 1} \frac{\ET{\LPF}}{\ET{\SERPTONE}} \cdot \frac{\ET{\SERPTONE}}{\ET{\SERPT}} =  \lim_{\rho \rightarrow 1} \frac{\ET{\LPF}}{\ET{\SERPTONE}} > 1.$$
We will also make an analogous argument for the case of \snds scaling.
We begin by proving a lower bound on the performance of LPF.

We refer to a single-server system of speed $k$ that serves jobs in LPF order as an \LPFONE system.  
We compare the performance of \LPFONE to the performance of LPF in our original k-server system to get
$\lim_{\rho \rightarrow 1} \frac{\ET{\LPF}}{\ET{\LPFONEM}} \geq 1.$
Similarly, in a \snds scaling regime,
$\lim_{k \rightarrow \infty} \frac{\ET{\LPF}}{\ET{\LPFONEM}} \geq 1.$
These claims follow from a coupling argument similar to the proof of Lemma \ref{lem:conventional:number}.
The main difference is that we are assuming a different priority ordering of the job classes in this case, but this does not affect our coupling argument.

To complete the first part of our claim, we will show that in conventional heavy traffic,
\begin{equation}\lim_{\rho \rightarrow 1} \frac{\ET{\LPFONEM}}{\ET{\SERPTONE}} > 1.\label{eq:ifcmu}\end{equation}
This will imply that
\begin{equation}\lim_{\rho \rightarrow 1} \frac{\ET{\LPF}}{\ET{\SERPTONE}}=\lim_{\rho \rightarrow 1}\frac{\ET{\LPFONEM}}{\ET{\SERPTONE}}\cdot \frac{\ET{\LPF}}{\ET{\LPFONEM}} > 1.\label{eq:ifsingle}\end{equation}

To show \eqref{eq:ifcmu}, we rewrite this claim in terms of class 1 and class 2 response times to get
\begin{align*}
\frac{\ET{\LPFONEM}}{\ET{\SERPTONE}} &= \frac{p_1 \ET[1]{\LPFONEM} + p_2 \ET[2]{\LPFONEM}}{p_1 \ET[1]{\SERPTONE} + p_2 \ET[2]{\SERPTONE}}\\
&=\frac{p_1 \frac{\E[S_1]}{k(1-\rho_1)} + p_2 \ET[2]{\LPFONEM}}{p_2 \frac{\E[S_2]}{k(1-\rho_2)} + p_1 \ET[1]{\SERPTONE}}=\frac{O(1) + p_2 \ET[2]{\LPFONEM}}{O(1) + p_1 \ET[1]{\SERPTONE}}=\frac{\frac{O(1)}{p_1 \ET[1]{\SERPTONE}} + \frac{p_2 \ET[2]{\LPFONEM}}{p_1 \ET[1]{\SERPTONE}}}{\frac{O(1)}{p_1 \ET[1]{\SERPTONE}} + 1}.
\end{align*}

Hence, it suffices to show that 
\begin{equation}\ET[1]{\SERPTONE} = \omega(1)\label{eq:cond1}\end{equation}
and 
\begin{equation}\lim_{\rho \rightarrow 1}\frac{p_2 \ET[2]{\LPFONEM}}{p_1 \ET[1]{\SERPTONE}} > 1.\label{eq:cond2}\end{equation}

From standard techniques for analyzing single-server priority queues, we know that $\ET[1]{\SERPTONE} = \Theta(\frac{\rho}{1-\rho}) = \omega(1)$ as $\rho \rightarrow 1$, and that \eqref{eq:cond2} holds when $\mu_2 > \mu_1$.
Combined, these two conditions prove \eqref{eq:ifcmu} and \eqref{eq:ifsingle}, completing the first part of the theorem.

To prove our second claim, we make an analogous argument about \snds scaling, where all limits are taken as $k \rightarrow \infty$.
Specifically,  $\ET[1]{\SERPTONE} = \Theta\left(\frac{1}{\alpha}\right)$ under \snds scaling, so \eqref{eq:cond1} still holds in this case.
Similarly, we know from standard single-server queueing analysis that
\begin{equation}\kinfty\frac{p_2 \ET[2]{\LPFONEM}}{p_1 \ET[1]{\SERPTONE}} > 1.\label{eq:cond3}\end{equation}
Thus, we can conclude that
$$\lim_{k\rightarrow \infty}\frac{\ET{\LPF}}{\ET{\SERPTONE}} = \lim_{k\rightarrow \infty}\frac{\ET{\LPFONEM}}{\ET{\SERPTONE}}\cdot\frac{\ET{\LPF}}{\ET{\LPFONEM}} = \lim_{k\rightarrow \infty}\frac{\ET{\LPFONEM}}{\ET{\SERPTONE}} > 1.$$

\end{proof}

\section{The \thrpolicy Policy}\label{sec:disc}
%
%
One limitation of our results thus far is that one may not know their system's scaling regime.
Theorems \ref{thm:if:opt} and \ref{thm:snds} suggest that one must know the system scaling behavior in order to select an asymptotically optimal policy.
We therefore propose a \emph{threshold-based policy}, \thrpolicy, that dynamically chooses between LPF and SERPT without knowing the system scaling regime.
\thrpolicy toggles between LPF and SERPT depending on whether the number of jobs in the system exceeds some threshold, $\tau \approx k$. 
When the number of jobs in the system is less than $\tau$, \thrpolicy uses LPF to defer parallelizable work and avoid idling servers.
When the number of jobs in the system is at least $\tau$, \thrpolicy uses SERPT to minimize queueing time.

Given two job classes with $c_1=1$ and $c_2=k$, we show that \thrpolicy is asymptotically optimal in both \shw and \snds regimes.
While our theoretical results are limited to this two-class case, we show in Section \ref{sec:eval} that \thrpolicy also performs well with additional inelastic and elastic classes of jobs.


The exact threshold value, $\tau$, affects the traffic regimes in which \thrpolicy is asymptotically optimal.
We show that when $\tau=\omega(k)$, \thrpolicy is asymptotically optimal in \shw and conventional heavy traffic regimes.
Furthermore, \thrpolicy is asymptotically optimal in any super-NDS regime where $\alpha = o(k/\tau)$.
That is, the threshold value $\tau$ can be any function that is $\omega(k)$, but its exact order affects the optimality result in the super-NDS regime.  For example, we can choose $\tau$ to be $\tau=k\log^* k$, where $\log^*$ is the iterated logarithm function.
Then in the super-NDS regime, \thrpolicy is asymptotically optimal when $\alpha=o(1/\log^* k)$.
Hence, \thrpolicy can be asymptotically optimal in an arbitrarily large portion of the \snds regimes, depending on the choice of $\tau$.
We state this formally in Theorem \ref{thm:thr:opt}, below.
\begin{samepage}
\begin{restatable}{theorem}{thropt}\label{thm:thr:opt}
Consider the problem of scheduling $\ell=2$ classes of parallelizable jobs with $c_1=1$ and $c_2=k$.
Consider the \thrpolicy policy with a threshold value $\tau=\omega(k)$.
For any policy, $\pi$:
\begin{enumerate}
\item If $\alpha = \omega(\sqrt{k\log k})$, then $\kinfty \frac{\ET{\pi}}{\ET{\THRESH}} \geq 1.$

\item In a conventional heavy-traffic scaling regime, 
$\lim_{\rho \rightarrow 1} \frac{\ET{\pi}}{\ET{\THRESH}}\geq1.$

\item If $\alpha = o\left(k/\tau\right)$, then $\kinfty \frac{\ET{\pi}}{\ET{\THRESH}}\geq1.$
\end{enumerate}
\end{restatable}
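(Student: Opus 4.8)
The plan is to prove each of the three regimes by leveraging the optimality results already established for LPF and SERPT, showing that \thrpolicy essentially mimics the correct policy in each regime so that the additive or multiplicative penalty from occasionally running the ``wrong'' policy vanishes asymptotically. The unifying idea is that \thrpolicy runs LPF when $N < \tau$ and SERPT when $N \geq \tau$, and with $\tau = \omega(k)$ the threshold sits strictly above the natural scale of the system in the light-load regimes but is crossed frequently in the heavy-load regimes.

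For part (1), the \shw regime, I would argue that \thrpolicy behaves like LPF with overwhelming probability. Theorem~\ref{thm:if:opt} shows LPF is optimal here, so it suffices to show that \thrpolicy rarely deviates from LPF and that the deviation contributes negligibly to $\E[T]$. The key is that under the light-load scaling, the total number of jobs $N$ concentrates at $\Theta(k)$ or below; since $\tau = \omega(k)$, the event $\{N \geq \tau\}$ has vanishing probability. I would make this rigorous by reusing the drift-based tail bounds from Lemma~\ref{lem:shw:inelastic} (for the $c_1=1$ class) together with the work-based bound from Lemma~\ref{if:shw:elastic:opt} (for the $c_2=k$ class) to show $Pr(N \geq \tau)$ decays fast enough that the contribution of the SERPT-phase to $\E[N]$ is $o(k)$, hence $o(1)$ response-time contribution after Little's law. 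The main subtlety is that when \thrpolicy does cross into SERPT, it may temporarily stop deferring parallelizable work, so I must bound not just the probability of crossing but the expected excursion length and the number of jobs accumulated during it --- a standard excursion/return-time argument driven by the negative drift LPF supplies once $N$ falls back below $\tau$.

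For parts (2) and (3), the heavy-traffic and \snds regimes, I would argue symmetrically that \thrpolicy behaves like SERPT. Here the system is heavily loaded, so $N \to \infty$ and the system spends almost all of its time above the threshold $\tau$, meaning \thrpolicy runs SERPT the overwhelming majority of the time. Theorems~\ref{thm:conventional} and~\ref{thm:snds} give that SERPT is optimal in these regimes, with $\ET{SERPT}$ scaling as $\Theta(1/(1-\rho))$ and $\Theta(1/\alpha)$ respectively. The strategy is to show that the time spent in the LPF phase (when $N < \tau$) adds at most an $O(\tau)$ term to $\E[N]$, and that this term is dominated by $\E[N^{SERPT}]$ which grows like $\lambdak/\alpha = \omega(\tau)$ precisely under the stated condition $\alpha = o(k/\tau)$ (equivalently $k/\alpha = \omega(\tau)$). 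For part (3) the hypothesis $\alpha = o(k/\tau)$ is exactly what guarantees the SERPT response time dominates the threshold penalty, so I would trace through how $\ET{c\mu} = \Theta(k/(\lambdak \alpha)) = \Theta(1/\alpha)$ compares against the $O(\tau/\lambdak)$ penalty from the LPF excursions.

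The main obstacle I anticipate is the boundary/toggling behavior near $N = \tau$: one cannot simply assert that \thrpolicy equals LPF or SERPT, because the policy oscillates across the threshold and each switch resets which jobs are prioritized. The cleanest way around this is a coupling or regenerative argument that isolates the cost incurred during each above-threshold excursion (for part 1) or each below-threshold excursion (for parts 2--3), bounds the expected number of jobs accumulated per excursion by $O(\tau)$ using the drift the ``dominant'' policy provides, and then shows this per-excursion cost, weighted by excursion frequency, is asymptotically negligible relative to the optimal response time in that regime. I would structure the full argument so that the two directions (light-load mimics LPF, heavy-load mimics SERPT) share this common excursion-cost lemma, with the regime-specific hypotheses on $\alpha$ and $\tau$ entering only at the final step where the penalty is compared to the baseline $\ET{LPF}$ or $\ET{SERPT}$.
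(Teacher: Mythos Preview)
Your high-level strategy is sound and matches the paper's intuition: in light load \thrpolicy mostly runs LPF, in heavy load it mostly runs SERPT, and the threshold $\tau = \omega(k)$ makes the ``wrong-policy'' cost negligible. However, the paper's execution differs from your proposed excursion-based route in two places, and your plan for part~(1) has a circularity concern.

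For part~(1), you propose to reuse Lemmas~\ref{lem:shw:inelastic} and~\ref{if:shw:elastic:opt} to bound $\Pr(N \geq \tau)$ under \thrpolicy. But those lemmas are proved for LPF, and once $N \geq \tau$ the policy switches to SERPT, so the LPF drift bounds no longer directly apply---you would be assuming the conclusion. The paper avoids this with a \emph{policy-agnostic} Lyapunov function: total work $W = N_1/\mu_1 + N_2/\mu_2$ has drift exactly $-\alpha$ whenever all servers are busy, regardless of whether LPF or SERPT is currently running, and ``all servers busy'' holds whenever $W \geq k/\mu_1$ (since then either $N_1 \geq k$ or $N_2 \geq 1$, and $c_2 = k$ fills the servers under either sub-policy). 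Because $N_1 + N_2 > \tau$ implies $W > \tau/\mu_2 = \omega(k)$, the work-based tail bound yields $\Pr(\mathrm{LPF}^c) = O(e^{-d\alpha})$ directly for \thrpolicy. The remainder follows the rate-conservation pattern of Lemma~\ref{lem:shw:in:opt}, with Cauchy--Schwarz absorbing the rare SERPT states---no excursion bookkeeping needed.

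For parts~(2) and~(3), you propose to show ``\thrpolicy $\approx$ SERPT plus $O(\tau)$ jobs'' via excursion cost. The paper instead compares \thrpolicy \emph{directly to the $c\mu$ lower bound}, class by class. For class~2, a one-line moment bound using $V = n_2$ gives $\E[N_2^{\THRESH}] \leq 2\tau + O(1)$ (this is your $O(\tau)$ penalty, obtained without any excursion analysis: simply note that when $n_2 \geq \tau$, \thrpolicy is certainly running SERPT and devoting all $k$ servers to class~2). The more interesting step is class~1: the paper observes that \thrpolicy is actually \emph{better} than $c\mu$ for class~1 (up to an additive $k$), because $c\mu$ always deprioritizes class~1 while \thrpolicy sometimes prioritizes it. This is established by coupling $k$-server \thrpolicy to a single-server speed-$k$ \thrpolicy system (the same argument as Lemma~\ref{lem:conventional:number}), and then noting that single-server \thrpolicy dominates single-server $c\mu$ for class~1. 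This sidesteps any need to couple \thrpolicy to SERPT, which would be awkward since the two policies disagree on an unbounded portion of the state space.

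Your excursion/regenerative framework could likely be pushed through, but you would still need to supply a policy-independent drift bound to break the circularity in part~(1), and for parts~(2)/(3) the coupling you would need between \thrpolicy and SERPT is not obviously easier than the paper's class-by-class comparison to $c\mu$.
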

\end{samepage}
\begin{proof}
When $\mu_1 \geq \mu_2$, LPF, SERPT, and \thrpolicy are equivalent, and this result is immediate.  Hence, we will assume that $\mu_1 < \mu_2$.  We prove the three claims of Theorem \ref{thm:thr:opt} below.


\subsection*{Proof of Claim 1}
We first split the mean response time into two parts by conditioning on job class: 
$$\ET{\THRESH}= p_1\E[T_1^{\THRESH}]  + p_2\E[T_2^{\THRESH}].$$
To prove the asymptotic optimality of \thrpolicy, we begin by showing that $\lim_{k\to\infty}\E[T_1^{\THRESH}]=\frac{1}{\mu_1}$.

Our proof strategy is to prove that $\E[\text{\# class 1 jobs in queue}]=o(k)$.  Specifically, let $N_1$ and $K_1$ be the number of class 1 jobs in the system and in service, respectively, in steady state.  Then we know that $\E[\text{\# class 1 jobs in queue}] = \E[N_1] - \E[K_1] = \E[N_1] - k\rho_1$, where we have used the fact that $\E[K_1]=k\rho_1$, which can be verified using Little's law.  Then $\E[N_1] - k\rho_1$ can be further upper bounded as
\begin{align}
\E[N_1] - k\rho_1
&\le \E\left[(N_1-(k\rho_1+\delta))^+ + \delta\right],\label{eq:pospart}
\end{align}
where $\delta$ is chosen such that $k\rho_1+\delta$ is an integer and $\delta=\Theta(\sqrt{k})$, and the superscript $^+$ means taking the positive part, i.e., $x^+=\max\{x,0\}$ for any real number $x$.
We assume that $k$ is sufficiently large that $k\rho_1 + \delta < k$.
This inequality \eqref{eq:pospart} can be verified by considering the two cases that $N_1-(k\rho_1+\delta) \ge 0$ and $N_1-(k\rho_1+\delta) < 0$.
We choose to bound $\E[N_1] - k\rho_1$ in this particular way because $(N_1-(k\rho_1+\delta))^+$ effectively removes the dynamics of $N_1$ when $N_1 < k\rho_1+\delta$, enabling us to focus on the more tractable scenario where $N_1\ge k\rho_1+\delta$. 
Note that $\delta=\Theta(\sqrt{k})=o(k)$.
Therefore, it suffices to prove $\E\left[(N_1-(k\rho_1+\delta))^+\right]=o(k)$, which is the focus of the remainder of this proof.

We bound $\E\left[(N_1-(k\rho_1+\delta))^+\right]$ using Lyapunov drift analysis.
We consider the Lyapunov function $V(n_1,n_2)=\left((n_1-(k\rho_1+\delta))^+\right)^2$ for any state $(n_1,n_2)$.  Let $k_1$ denote the number of class 1 jobs in service when the state is $(n_1,n_2)$.
Then the drift of $V$ can be written as
\begin{align*}
\Delta V(n_1,n_2) &= \1{n_1=k\rho_1+\delta}\cdottight\lambdak_1+\1{n_1\ge k\rho_1+\delta+1}\cdot \Bigl(\lambdak_1\cdot\left(2(n_1-(k\rho_1+\delta))^+ +1\right)\\
&\mspace{270mu}+\mu_1k_1\left(-2(n_1-(k\rho_1+\delta))^+ +1\right)\Bigr).
\end{align*}
We know that in steady state, $\E[\Delta V(N_1,N_2)]=0$.  Therefore,
\begin{align}
&\mspace{21mu}\E\left[2(N_1-(k\rho_1+\delta))^+(\mu_1K_1-\lambdak_1)\cdot \1{N_1\ge k\rho_1+\delta+1}\right]\nonumber\\
&=\E\left[(\lambdak_1+\mu_1K_1)\cdot \1{N_1\ge k\rho_1+\delta+1}+\lambdak_1\cdot \1{N_1= k\rho_1+\delta}\right] \le \E\left[(\lambdak_1+\mu_1K_1)\cdot \1{N_1\ge k\rho_1+\delta}\right] \nonumber\\
&\le \lambdak_1+k\mu_1.\label{eq:drift-equality}
\end{align}

We lower bound the left-hand-side term $\E\left[2(N_1-(k\rho_1+\delta))^+(\mu_1K_1-\lambdak_1)\cdot \1{N_1\ge k\rho_1+\delta+1}\right]$.
We abuse the notation a bit and reuse $\mathrm{LPF}$ to also denote the event that the system is running LPF, i.e., the event that $N_1+N_2\le \tau$.
Then,
\begin{align}
&\mspace{21mu}\E\left[2(N_1-(k\rho_1+\delta))^+(\mu_1K_1-\lambdak_1)\cdottight \1{N_1\ge k\rho_1+\delta+1}\right]\nonumber\\
&=\E\left[2(N_1-(k\rho_1+\delta))^+(\mu_1K_1-\lambdak_1)\cdottight \1{\mathrm{LPF}\cap \{N_1\ge k\rho_1+\delta+1\}}\right]\nonumber\\
&\mspace{21mu}+\E\left[2(N_1-(k\rho_1+\delta))^+(\mu_1K_1-\lambdak_1)\cdottight \1{\mathrm{LPF}^c\cap \{N_1\ge k\rho_1+\delta+1\}}\right]\nonumber\\
&\ge \E\left[2(N_1-(k\rho_1+\delta))^+\cdottight\mu_1(\delta+1)\cdottight \1{\mathrm{LPF}\cap \{N_1\ge k\rho_1+\delta+1\}}\right]\label{eq:left1}\\
&\mspace{21mu}+\E\left[2(N_1-(k\rho_1+\delta))^+(-\lambdak_1)\cdottight \1{\mathrm{LPF}^c\cap \{N_1\ge k\rho_1+\delta+1\}}\right]\nonumber\\
&=\E\left[2(N_1-(k\rho_1+\delta))^+\cdottight\mu_1(\delta+1)\cdottight \1{\LPF}\right]+\E\left[2(N_1-(k\rho_1+\delta))^+(-\lambdak_1)\cdottight \1{\mathrm{LPF}^c\cap \{N_1\ge k\rho_1+\delta+1\}}\right]\label{eq:left2}\\
&=\E\left[2(N_1-(k\rho_1+\delta))^+\cdottight\mu_1(\delta+1)\right]-\E\left[2(N_1-(k\rho_1+\delta))^+\cdottight\mu_1(\delta+1)\cdottight \1{\mathrm{LPF}^c}\right]\nonumber\\
&\mspace{21mu}+\E\left[2(N_1-(k\rho_1+\delta))^+(-\lambdak_1)\cdottight \1{\mathrm{LPF}^c\cap \{N_1\ge k\rho_1+\delta+1\}}\right]\nonumber\\
&\ge \E\left[2(N_1-(k\rho_1+\delta))^+\cdottight\mu_1(\delta+1)\right]-\E\left[2(N_1-(k\rho_1+\delta))^+\cdottight(\mu_1(\delta+1)+\lambdak_1)\cdottight \1{\mathrm{LPF}^c}\right]\label{eq:left-lower},
\end{align}
where \eqref{eq:left1} is due to the fact that $K_1\ge k\rho_1+\delta+1$ when LPF is in use and $N_1\ge k\rho_1+\delta+1$, and \eqref{eq:left2} is due to the equality $(N_1-(k\rho_1+\delta))^+\cdot \1{N_1\ge k\rho_1+\delta+1}=(N_1-(k\rho_1+\delta))^+$.

We now plug the lower bound \eqref{eq:left-lower} back into \eqref{eq:drift-equality}, which gives
{\small
\begin{align}
\E\left[2(N_1-(k\rho_1+\delta))^+\cdottight\mu_1(\delta+1)\right]\le \lambdak_1+k\mu_1+\E\left[2(N_1-(k\rho_1+\delta))^+\cdottight(\mu_1(\delta+1)+\lambdak_1)\cdottight \1{\mathrm{LPF}^c}\right]\nonumber\\
\E\left[2(N_1-(k\rho_1+\delta))^+\right]\le\frac{\lambdak_1+k\mu_1}{2\mu_1(\delta+1)}+\frac{\E\left[(N_1-(k\rho_1+\delta))^+\cdot(\mu_1(\delta+1)+\lambdak_1)\cdot \1{\mathrm{LPF}^c}\right]}{\mu_1(\delta+1)}\nonumber.
\end{align}
}
Since the first term $\frac{\lambdak_1+k\mu_1}{2\mu_1(\delta+1)}=o(k)$, it suffices to show the second term is also $o(k)$.
To this end, we have
\begin{align}
\E\left[(N_1-(k\rho_1+\delta))^+\cdottight(\mu_1(\delta+1)+\lambdak_1)\cdottight \1{\mathrm{LPF}^c}\right]
&\le (\mu_1(\delta+1)+\lambdak_1)\cdottight \E\left[N_1\cdottight \1{\mathrm{LPF}^c}\right]\nonumber\\
&\le (\mu_1(\delta+1)+\lambdak_1)\cdottight \E\left[N_1^2\right]^{1/2}\cdottight (\Pr(\mathrm{LPF}^c))^{1/2}\label{eq:left-cs}\\
&\le (\mu_1(\delta+1)+\lambdak_1)\cdottight O(k^2)\cdottight (\Pr(\mathrm{LPF}^c))^{1/2}\label{eq:left-moment}.
\end{align}
Here, \eqref{eq:left-cs} follows from the Cauchy-Schwarz inequality, and \eqref{eq:left-moment} follows from the same moment bound used in Appendix \ref{sec:momentbound}.
What remains is to bound $\Pr(\mathrm{LPF}^c)$.

Recall that \thrpolicy performs LPF if $N_1+N_2\le \thr$.
Thus, $\Pr(\mathrm{LPF}^c)=\Pr(N_1+N_2 > \thr)$.
Given sufficiently large values of $k$, 
\begin{align}
N_1+N_2 > \thr  \Rightarrow \frac{N_1}{\mu_1}+\frac{N_2}{\mu_2} > \frac{N_1+N_2}{\mu_2} > \frac{\thr}{\mu_2} \label{eq:event1}
\Rightarrow \frac{N_1}{\mu_1}+\frac{N_2}{\mu_2} > \frac{2k}{\mu_1},
\end{align}
where \eqref{eq:event1} uses the assumption that $\mu_1<\mu_2$ and $\tau=\omega(k)$.
We consider the Lyapunov function $W(n_1,n_2)=\frac{n_1}{\mu_1}+\frac{n_2}{\mu_2}$.  When $W(n_1,n_2) \ge \frac{k}{\mu_1}$, we have either at least $k$ class 1 jobs or at least $1$ class 2 job in service.  This means all servers are busy under \thrpolicy.  Let $k_1$ and $k_2$ be the numbers of servers working on class 1 and class 2 jobs, respectively, in state $(n_1,n_2)$. Therefore, because $k_1 + k_2 = k$, we have
\begin{align*}
\Delta W(n_1,n_2) = \lambdak_1\frac{1}{\mu_1}+\lambdak_2\frac{1}{\mu_2}-k_1\cdot \mu_1\frac{1}{\mu_1}-k_2\cdot\mu_2\frac{1}{\mu_2}\nonumber&=k\rho_1+k\rho_2-k =-\alpha.
\end{align*}
Then, using the same tail bound from Appendix \ref{sec:driftbound}, we have
\begin{align}
\Pr(\frac{N_1}{\mu_1}+\frac{N_2}{\mu_2} > \frac{2k}{\mu_1})\le \left(\frac{(\lambdak_1+\lambdak_2)/\mu_1}{(\lambdak_1+\lambdak_2)/\mu_1+\alpha}\right)^{k+1}=O\left(e^{-d\alpha}\right),
\end{align}
for some constant $d>0$ independent of $k$.
Therefore,
\begin{align*}
\Pr(\mathrm{LPF}^c)=\Pr(N_1+N_2>\tau)=O\left(e^{-d\alpha}\right).
\end{align*}
Plugging this back into \eqref{eq:left-moment} yields
\begin{align}
\E\left[(N_1-(k\rho_1+\delta))^+\cdot(\mu_1(\delta+1)+\lambdak_1)\cdot \1{\mathrm{LPF}^c}\right]=O\left(k^3e^{-d\alpha}\right).
\end{align}

Putting everything together gives
\begin{align}
\E\left[2(N_1-(k\rho_1+\delta))^+\right]
&\le \frac{\lambdak_1+k\mu_1}{2\mu_1(\delta+1)}+\frac{\E\left[(N_1-(k\rho_1+\delta))^+\cdot(\mu_1(\delta+1)+\lambdak_1)\cdot \1{\mathrm{LPF}^c}\right]}{\mu_1(\delta+1)}\nonumber\\
&=o(k)+\frac{O\left(k^3e^{-c\alpha}\right)}{\mu_1(\delta+1)}\nonumber =o(k).
\end{align}

Applying Little's Law then gives
$$\kinfty \E[T_1] \leq \frac{o(k) + k\rho_1 + \delta}{\lambdak_1} = \frac{1}{\mu_1}.$$

To complete our proof, note that the response time of class 2 jobs under \thrpolicy should intuitively be lower than their response time under LPF, since the number of servers devoted to class 2 jobs under \thrpolicy is greater than or equal to the number of servers used for class 2 jobs by LPF in every state.
Formally, we can prove this using a drift argument that is nearly identical to that of Lemma \ref{if:shw:elastic:opt}.
Specifically, we pessimistically assume that class 2 jobs always get $(k-N_1)^+$ servers.
Then, we bound the number of class 1 jobs using the modified tail bound from Appendix \ref{sec:app:mod}.
Here, we use our bound on $Pr(\mathrm{LPF}^c)$ to bound the probability of having positive drift in the number of class 1 jobs.
From this point, the argument to bound $\E[N_2]$ is identical to the proof of Lemma \ref{if:shw:elastic:opt}, giving $\lim_{k\to\infty}\E[T_2^{\THRESH}]=0.$

Hence, the mean response time of both class 1 and class 2 jobs under THRESH is asymptotically optimal.
\subsection*{Proof of Claim 2}


Our proof follows the general argument of Theorem~\ref{thm:conventional} with some adjustments.
Recall that in the proof of Theorem~\ref{thm:conventional}, we consider the SERPT-1 system, which uses a single server of speed $k$ to run the optimal single-server policy that prioritizes jobs with the highest values of $\mu_i$.

For any scheduling policy $\pi$, $\lim_{\rho \rightarrow 1} \frac{\ET{\pi}}{\ET{\SERPTONE}}\geq1.$  Therefore, to prove that $\lim_{\rho \rightarrow 1} \frac{\ET{\pi}}{\ET{\THRESH}}\geq1$, it suffices to prove that $\lim_{\rho \rightarrow 1} \frac{\ET{\SERPTONE}}{\ET{\THRESH}}=1.$
Because $\lim_{\rho \rightarrow 1} \frac{\ET{\SERPTONE}}{\ET{\THRESH}}\le 1$,
it suffices to prove that $\lim_{\rho \rightarrow 1} \frac{\ET{\SERPTONE}}{\ET{\THRESH}}\ge 1.$

We first split the mean response time into two parts based on class 1 and class 2 jobs:
\begin{align}
\frac{\ET{\SERPTONE}}{\ET{\THRESH}}&=\frac{p_1\E[T_1^{\SERPTONE}] + p_2\E[T_2^{\SERPTONE}]}{p_1\E[T_1^{\THRESH}] + p_2\E[T_2^{\THRESH}]}.\nonumber
\end{align}

Intuitively, class 1 jobs should perform well under \thrpolicy compared to the SERPT-1 system, because class 1 jobs always have low priority in the SERPT-1 system.
Specifically, we can show that
\begin{equation}\E[N_1^{\THRESH}] \leq \E[N_1^{\SERPTONE}] + k.\label{eq:thbound}\end{equation}
To prove this, we compare \thrpolicy to a single-server system of speed $k$ that also runs \thrpolicy.
Using the same coupling argument from the proof of Theorem \ref{thm:conventional}, we show that the single-server version of \thrpolicy is within $k$ jobs of the $k$ server version of \thrpolicy on average.
Then, because the SERPT-1 system gives lower priority to class 1 jobs than \thrpolicy, the SERPT-1 system must have more class 1 jobs than the single-server \thrpolicy system on average.
Taken together, these arguments yield \eqref{eq:thbound}.
We then have
\begin{align}
\frac{\ET{\SERPTONE}}{\ET{\THRESH}} \geq \frac{p_1 \ET[1]{\SERPTONE} + p_2 \ET[2]{\SERPTONE}}{p_1\ET[1]{\SERPTONE} + O(1) + p_2\ET[2]{\THRESH}} &\geq \frac{p_1 \ET[1]{\SERPTONE}}{p_1\ET[1]{\SERPTONE} + O(1) + p_2\ET[2]{\THRESH}}\label{eq:cmu:th}.
\end{align}

Next, we know from classical queueing theory that 
\begin{equation}\E[T_1^{\SERPTONE}]\ge\frac{1}{1-\rho_2}\left(\frac{1}{k\mu_1}+\frac{\rho_2}{k\mu_2(1-\rho)}+\frac{\rho_1}{k\mu_1(1-\rho)}\right).\label{eq:th:bound}\end{equation}

We can also prove the following lemma, whose proof is given in Appendix \ref{sec:thresh:lems}.
\begin{restatable}{lemma}{tethresh}\label{lem:TETHRESH}
$\displaystyle\E[T_2^{\THRESH}]\le\frac{2\thr}{k\mu_2\rho_2}+\frac{4}{k\mu_2\rho_2(1-\rho_2)}.$
\end{restatable}
To finish our proof, we can apply these bounds to \eqref{eq:cmu:th}, giving
\begin{align*}
\frac{\ET{\SERPTONE}}{\ET{\THRESH}} &\geq \frac{p_1 \ET[1]{\SERPTONE}}{p_1\ET[1]{\SERPTONE} + O(1) + O(1)} \mand\\
\rhoone \frac{\ET{\SERPTONE}}{\ET{\THRESH}} &\geq \rhoone \frac{1}{1 + O(1-\rho)} = 1.
\end{align*}

\subsection*{Proof of Claim 3}
To prove claim 3, we can slightly modify our proof of claim 2.
Recall that for \thrpolicy in \snds regimes, we have assumed $\alpha=o(k/\tau)$.
Furthermore, we have that $\kinfty \rho_2=\rho_2^*$ where $\rho_2^*$ is a constant.
Therefore, in \snds regimes, the bound in \eqref{eq:th:bound} becomes
$\E[T_1^{\SERPTONE}]=\omega(\tau/k),$
and the upper bound in Lemma~\ref{lem:TETHRESH} becomes
$\E[T_2^{\THRESH}]=O(\tau/k).$
Applying these bounds to \eqref{eq:cmu:th} (which remains the same) gives
\begin{align*}
\frac{\ET{\SERPTONE}}{\ET{\THRESH}} &\geq \frac{p_1 \ET[1]{\SERPTONE}}{p_1\ET[1]{\SERPTONE} + O(1) + O\left(\frac{\tau}{k}\right)} \mand\\
\kinfty \frac{\ET{\SERPTONE}}{\ET{\THRESH}} &\geq \kinfty \frac{1}{1 + o\left(\frac{k}{\tau}\right) + o(1)} = 1.
\end{align*}
Taken together, these three claims give Theorem \ref{thm:thr:opt}.
\end{proof}



\section{Evaluation}\label{sec:eval}

In this section, we evaluate our theoretical results via simulation.
While our theoretical results require some technical assumptions, we find that the high-level lessons from these theorems ---  deferring parallelizable work in lighter load and prioritizing short jobs in heavier load ---  apply more generally.

\subsection{\thrpolicy with Additional Classes}
While we prove the asymptotic optimality of \thrpolicy in a special case of our model with $\ell=2$ job classes, we conjecture that \thrpolicy works well in general for any number of job classes with arbitrary levels of parallelizability.
To validate this intuition, we simulate LPF, SERPT, and THRESH in various scaling regimes in Figure \ref{fig:sims}.
Indeed, given $\ell=4$ job classes with a variety of parallelizability levels, THRESH performs near-optimally in both \shw and \snds scaling regimes.

While our theoretical results only consider the \shw and \snds scaling regimes, we simulated the performance of these policies in a middle scaling regime that lies between these cases for the sake of completeness.  These results are presented in Appendix \ref{sec:app:middle}. 

\begin{figure}
\centering
\includegraphics[width=.95\textwidth]{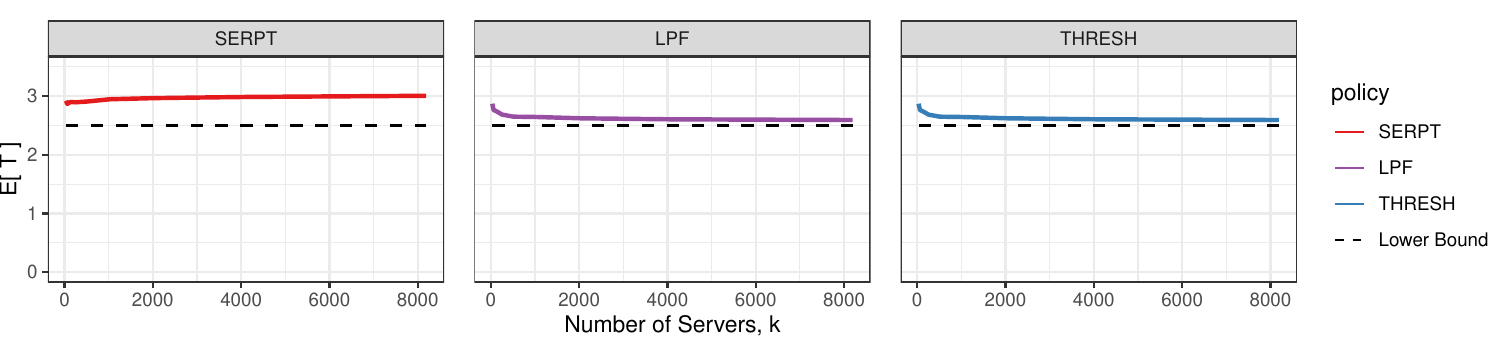}
\includegraphics[width=.95\textwidth]{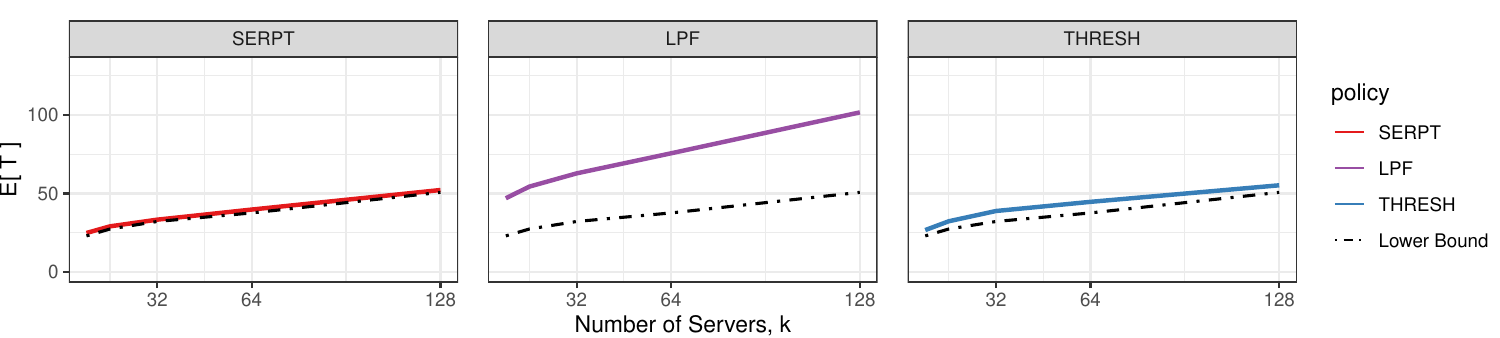}
\caption{Simulations of various scheduling policies in \shw (top) and \snds (bottom) regimes with $\alpha=2k^{3/4}$ and $\alpha=2k^{-1/4}$ respectively.  Here, $\ell=4$, and $c_1=1$, $c_2=4$, $c_3=\log k$, $c_4=k$ and $\mu_1=.2$, $\mu_2=.05$, $\mu_3=.3$, $\mu_4=.1$.  
In \shw regimes, mean response time under LPF converges to mean service time (our ``lower bound'').
In \snds regimes, mean response time under SERPT is close to that of the SERPT-1 policy, a lower bound that assumes a single fast server.
\thrpolicy performs well in both regimes.
}
\label{fig:sims}
\end{figure}

\subsection{General Job Size Distributions}\label{sec:eval:gen}
We have thus far assumed that each class of jobs, $i$, is associated with some exponential job size distribution.
It is not clear how our results generalize when job sizes are generally distributed.

First, consider scheduling generally distributed parallelizable jobs in conventional heavy traffic.
Recent results \cite{scully2020gittins} have demonstrated that a Gittins index policy is optimal for scheduling non-parallelizable jobs in conventional heavy traffic.
We find that these results hold when jobs are parallelizable.
Specifically, \cite{scully2020gittins} defines a quantity called $B(r)$ that tracks the number of servers used to run jobs with a Gittins index of $r$ or less.
When moving from non-parallelizable jobs to parallelizable jobs, $B(r)$ does not decrease for any given system state, implying that the bounds from this work continue to hold when scheduling parallelizable jobs.
The key fact exploited in \cite{scully2020gittins} is that, when $B(r)$ is small because fewer than $k$ servers are working on a job with Gittins index at most $r$, there must be at most $k-1$ rank $r$ jobs in the system.
When jobs are parallelizable, this same principle holds since we now assume that each job runs on \emph{at least} one server.
Following the arguments from \cite{scully2020gittins}, we find that prioritizing parallelizable jobs based on their Gittins indices is optimal in conventional heavy traffic.

Unfortunately, this argument does not generalize cleanly to other scaling regimes.
The proof of heavy-traffic optimality in \cite{scully2020gittins} relies on the analysis of SRPT in conventional heavy traffic from \cite{lin2010average}.
Currently, the behavior of SRPT in other scaling regimes is not well-understood.
We conjecture that a Gittins-based policy is asymptotically optimal in \snds scaling regimes, but a proof of this claim does not follow immediately from \cite{scully2020gittins} and is beyond the scope of this paper.
\begin{figure}
\includegraphics[width=.95\textwidth]{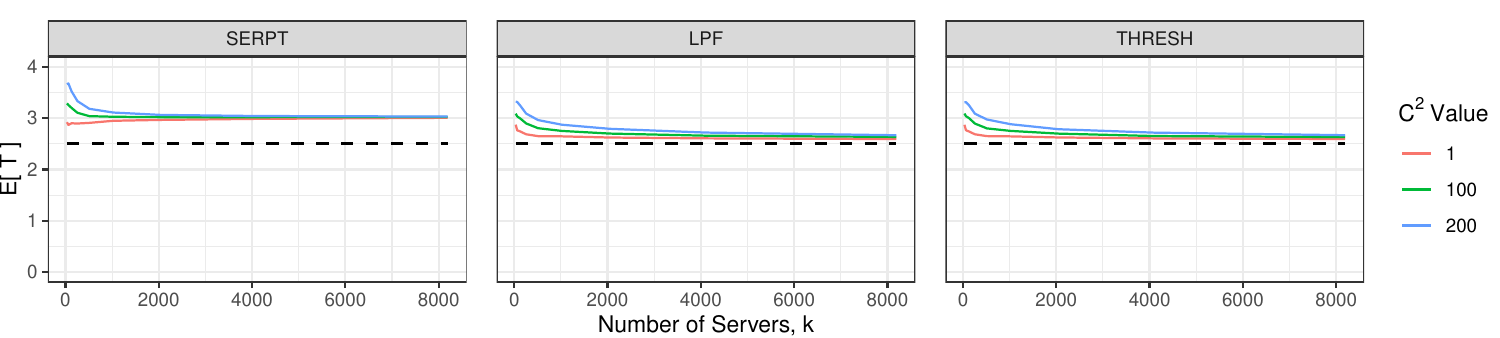}
\caption{Simulations showing the effect of variability on mean response time.  Experiments use the same $\ell=4$ job classes as Figure \ref{fig:sims}, but with different job size distributions.  When $C^2=1$, the job size distribution is exponential.  As $C^2$ increases, we examine hyperexponential job size distributions with two states such that the mean job size remains the same, but the squared coefficient of variation is increased.
}
\label{fig:hyper}
\end{figure}

Second, consider scheduling non-exponentially-distributed jobs in \shw scaling regimes.
Given that our results hold for $\ell$ exponential job classes, we conjecture that our results hold for phase-type job size distributions.
Specifically, our current proof of the optimality of LPF bounds $\E[T_i]$ by showing that the departure rate of class-$i$ jobs is almost always greater than $\lambdak_i$.
Given jobs with phase-type distributions, we can use a similar argument to bound the time required to complete each phase of each type of job.
However, when considering some phase, $j$, of class $i$ jobs, the rate at which jobs enter phase $j$ is not stationary --- it depends on the number of class $i$ jobs in service whose next phase could be $j$.
Hence, we must argue that the completion rate of jobs in phase $j$ is almost always greater than the time-varying rate at which jobs enter phase-$j$.
We can easily compute the long-run average rate at which jobs enter phase $j$, but we must show that the fluctuations around this average rate do not cause problems for a policy like LPF.

While a proof for the case of phase-type job sizes is beyond the scope of this paper, we check via simulation to confirm that our conjecture is plausible.
Figure \ref{fig:hyper} shows that our results appear to hold given jobs that follow different hyperexponential distributions with various levels of job size variability.
Here, $C^2=1$ corresponds to the same exponential distributions as Figure \ref{fig:sims}.
The larger values of $C^2$ denote that the job size distributions are balanced-mean hyperexponential distributions with the same mean job sizes but correspondingly higher squared coefficients of variation.
We find that, while convergence appears slower as $C^2$ increases, LPF and THRESH appear to be asymptotically optimal in \shw scaling regimes.

\section{Conclusion}
This paper presents the first theoretical results on balancing the tradeoff between prioritizing short jobs and deferring parallelizable work.
Additionally, this paper provides the first optimality results on scheduling more than two classes of parallelizable jobs, where each class has its own job size distribution and parallelizability level.
We show that the choice of an asymptotically optimal scheduling policy depends on the scaling regime chosen.
LPF is asymptotically optimal in \shw regimes, and SERPT is asymptotically optimal in \snds regimes.
The THRESH policy, which combines LPF and SERPT, is asymptotically optimal in both \shw and \snds scaling regimes.

We see two main avenues for extending our results.
First, as discussed in Section \ref{sec:eval:gen}, allowing jobs to follow arbitrary size distributions complicates our analysis.
Existing techniques for analyzing $M/G/k$ systems with non-parallelizable jobs have only considered scheduling in conventional heavy traffic, and it remains unclear how to extend these results both to lighter-load scaling regimes and to scheduling parallelizable jobs.
Second, we have assumed that our jobs are perfectly parallelizable up to some parallelizability level.
In practice, however, jobs may receive a \emph{sublinear} speedup from parallelization.
Hence, while our results consider a work-conserving system, it is of great practical interest to analyze a non-work-conserving variant of our problem.
We note that this generalization is related to the problem of state-dependent service rates \cite{ayesta2020scheduling} and the analysis of systems with Markov-Modulated service rates \cite{duran2022whittle}.
We hope that the techniques and intuition developed in these papers can be applied to the problem of scheduling jobs with sublinear speedups.

{\small
     \bibliographystyle{abbrvnat}
     \bibliography{bibshort}
}
\begin{appendices}
\section{Asymptotic Notation}\label{sec:asymptoticnote}
We use standard asymptotic notation throughout the paper to describe scaling behavior.
Specifically, consider two functions $f(k), g(k) > 0$, and let $c_1, c_2 >0$ be arbitrary constants.  We define the relevant notation in the table below.
\begin{table}[h]
\label{tab:asymptotes}
\resizebox{\textwidth}{!}{%
\begin{tabular}{|c|c|c|c|}
\hline
Name & Notation & Definition & Description \\
\hline
  Little-o   &  $f(k)=o(g(k))$     &   $\kinfty \frac{f(k)}{g(k)} = 0$ & $f(k)$ grows much slower than $g(k)$        \\
  \hline
  Big-O   &  $f(k)=O(g(k))$      &     $\limsup_{k\to\infty} \frac{f(k)}{g(k)} \leq c_1$ & $f(k)$ grows weakly slower than $g(k)$       \\
  \hline
  Big-$\Theta$   &    $f(k)=\Theta(g(k))$     &    $c_1 \leq \limsup_{k\to\infty} \frac{f(k)}{g(k)} \leq c_2$ & $f(k)$ grows roughly as fast as $g(k)$        \\
  \hline
  Little-$\omega$ &  $f(k)=\omega(g(k))$      & $\kinfty \frac{f(k)}{g(k)} = \infty$ & $f(k)$ grows much faster than $g(k)$\\
  \hline
\end{tabular}%
}
\end{table}

%
%
%

\section{A Drift-Based Tail Bound}\label{sec:driftbound}

We use Lemma 10 from \cite{wang2022heavy} to bound the tail probability of $N_1^{\LPF}$.
This lemma defines a Lyapunov function, $V(x)$.  
If the drift of the Lyapunov function is negative when $V(x) > y$, the lemma bounds the probability that $V(X) \gg y$.
A full statement of Lemma 10 from \cite{wang2022heavy} can be found in Appendix~\ref{ecomp}.
Using this lemma, we prove the following tail bound.

\begin{restatable}{lemma}{driftbound}\label{lem:driftbound}
For some constant $d > 0$,
\begin{equation*}
Pr\left(N_1^{\LPF} \geq \rho_1k + \frac{\alpha}{\ell}\right) = O(e^{-d\alpha^2/k})
\end{equation*}

\end{restatable}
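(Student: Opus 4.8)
The plan is to exploit the fact that, under LPF, the class 1 jobs form an \emph{autonomous} subsystem. Because LPF gives class 1 strict priority and $c_1 = 1$, every class 1 job present occupies exactly one server and is never preempted or blocked by higher classes, so $N_1^{\LPF}$ evolves as a birth--death process that does not depend on the other classes: arrivals occur at rate $\lambdak_1 = \rho_1 k \mu_1$, and whenever $n_1 \leq k$ each of the $n_1$ jobs is in service, giving total departure rate $\mu_1 n_1$. First I would verify that the region of interest stays below $k$: since $\rho_1 \leq \rho$ we have $\alpha = k(1-\rho) \leq k(1-\rho_1)$, whence $\rho_1 k + \frac{\alpha}{\ell} < k$ for $\ell \geq 1$, so throughout this region every class 1 job is indeed being served and the departure rate is exactly $\mu_1 n_1$.

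Next I would set up the drift argument with the Lyapunov function $V(x) = n_1$, exactly as in the base case of Lemma~\ref{lem:shw:inelastic}. For any state with $n_1 \geq \rho_1 k + \frac{\alpha}{2\ell}$ (which by the above lies below $k$), the instantaneous drift is $\Delta V(x) = \lambdak_1 - \mu_1 n_1 \leq -\mu_1 \frac{\alpha}{2\ell}$, so there is a strictly negative drift of magnitude $\Theta(\alpha)$ once $V$ exceeds the threshold $B := \rho_1 k + \frac{\alpha}{2\ell}$. In this region the jumps of $V$ are bounded by $1$ and the total transition rate is $\lambdak_1 + \mu_1 n_1 = \Theta(k)$.

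With these ingredients I would invoke the drift-based tail bound, Lemma~10 of \cite{wang2022heavy}, which converts a negative-drift-beyond-threshold condition into an exponential tail on $V$ via an exponential-supermartingale (moment generating function) argument. The event $\{N_1^{\LPF} \geq \rho_1 k + \frac{\alpha}{\ell}\}$ requires $V$ to exceed $B$ by a further $\frac{\alpha}{2\ell} = \Theta(\alpha)$. In the MGF step, the admissible tuning parameter $\theta$ scales like (drift magnitude)/(transition rate) $= \Theta(\alpha)/\Theta(k) = \Theta(\alpha/k)$, so Markov's inequality on $e^{\theta V}$ gives a tail of order $e^{-\theta \cdot \Theta(\alpha)} = O\!\left(e^{-d\alpha^2/k}\right)$ for a suitable constant $d > 0$, which is exactly the claim (with $\const{f}{1}{1}=1$).

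The main obstacle is the careful bookkeeping in applying the cited bound so that the exponent emerges as $\alpha^2/k$ rather than $\alpha$ or $\alpha^2/k^2$: one must confirm that the order-$\alpha$ negative drift, the bounded unit jumps, and the $\Theta(k)$ transition rate combine so that the feasible exponent $\theta$ is genuinely of order $\alpha/k$, and then that the excursion length $\frac{\alpha}{2\ell}$ supplies the second factor of $\alpha$. The remaining checks---exactness of the birth--death reduction, that the region stays strictly below $k$, and positivity of the constants---are routine.
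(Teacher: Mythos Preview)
Your proposal is correct and follows essentially the same approach as the paper: you use the Lyapunov function $V(x)=n_1$, establish the drift bound $\Delta V(x)\le -\mu_1\frac{\alpha}{2\ell}$ for $n_1\ge \rho_1 k+\frac{\alpha}{2\ell}$, and then invoke Lemma~10 of \cite{wang2022heavy} to turn the $\Theta(\alpha)$ drift and $\Theta(k)$ transition rate into the $e^{-d\alpha^2/k}$ tail. Your additional observations (that class~1 is an autonomous birth--death chain under LPF and that $\rho_1 k+\frac{\alpha}{\ell}<k$) are useful sanity checks the paper leaves implicit, but otherwise the arguments coincide.
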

\begin{proof}
We begin by defining a Lyapunov function $V(x)=n_1$.
Let $\mu_{max}=\max_{1\leq i \leq \ell} \mu_i$.

When $n_1 \geq \rho_1 k + \frac{\alpha}{2\ell}$, the drift of the Lyapunov function, $\Delta V$, is
$$\Delta V(x) = \lambdak_1 - \mu_1 \cdot n_1 \leq \lambdak_1 - \mu_1 (\rho_1 k + \frac{\alpha}{2\ell}) = -\mu_1 \frac{\alpha}{2\ell}.$$

The other conditions needed to apply the lemma from \cite{wang2022heavy} are trivially satisfied.
Furthermore, the maximum rate of departures from any state, $q_{max}$, is $\mu_{max} + \lambdak$.
We therefore obtain the following bound:
$$Pr\left(N^{\LPF}_1 \geq \left(\rho_1 k + \frac{\alpha}{2\ell}\right) + 2\cdot\frac{\alpha}{4\ell}\right) \leq \left(\frac{q_{max}}{q_{max} + \frac{\alpha}{2\ell}}\right)^{1+\frac{\alpha}{4\ell}} \leq  \left(\frac{q_{max}}{q_{max} + \frac{\alpha}{2\ell}}\right)^{\frac{\alpha}{4\ell}}$$

We then simplify this expression as follows,
\begin{align*}
Pr\left(N^{\LPF}_1 \geq \left(\rho_1 k + \frac{\alpha}{2\ell}\right) + 2\cdot\frac{\alpha}{4\ell}\right) &\leq \left(\frac{q_{max}+\frac{\alpha}{2\ell}}{q_{max} + \frac{\alpha}{2\ell}} - \frac{\frac{\alpha}{2\ell}}{q_{max} + \frac{\alpha}{2\ell}} \right)^{\frac{\alpha}{4\ell}}\\
&= \left(1-\frac{1}{\frac{2\ell \cdot q_{max}}{\mu_1\alpha}+1}\right)^{\frac{\alpha}{4\ell}}\\
&= \left(1-\frac{1}{\frac{k}{\mu_1\alpha}(2\ell\mu_{max} + \frac{\lambdak}{k}+\frac{\alpha}{k})}\right)^{\frac{\alpha}{4\ell}}\\
&\leq \left(1-\frac{d}{\frac{k}{\alpha}}\right)^{\frac{\alpha}{4\ell}}
\end{align*}
for some constant $d>0$.  Note that this constant $d$ exists because $\alpha < k$ and $\lambdak = \Theta(k)$.
To get the desired bound, we can then write
\begin{align*}
Pr\left(N^{\LPF}_1 \geq \left(\rho_1 k + \frac{\alpha}{2\ell}\right) + 2\cdot\frac{\alpha}{4\ell}\right) &\leq \left(\left(1-\frac{c_1}{\frac{k}{\alpha}}\right)^{\frac{k}{\alpha}}\right)^{\frac{\alpha^2}{4\ell k}}.
\end{align*}
For any positive value of $k$, this yields
\begin{align*}
Pr\left(N^{\LPF}_1 \geq \left(\rho_1 k + \frac{\alpha}{2\ell}\right) + 2\cdot\frac{\alpha}{4\ell}\right) = Pr\left(N^{\LPF}_1 \geq \rho_1 k + \frac{\alpha}{\ell}\right)  &\leq e^{-d\alpha^2/4\ell k}.
\end{align*}
\end{proof}

\section{A Modified Drift-Based Tail Bound}\label{sec:app:mod}

We now apply a slightly modified version of the drift bound from Appendix \ref{sec:driftbound} to bound $N_i^{LPF}$.
Whereas the original bound requires strictly negative drift in the Lyapunov function when its value is sufficiently large, the modified tail bound allows positive drift in the Lyapunov function as long as the stationary probability of being in the states with positive drift is bounded.
This bound is originally proven as Lemma~A.1 of \cite{weng2020achieving} (see the e-companion for details).

\begin{restatable}{lemma}{lem-mod-bound}\label{lem:mod:bound}
For any class of jobs, $i+1 \leq I$, there exist constants $\const{f}{i+1}{i+1}>0$ and $\const{d}{i+1}{i+1}>0$ such that
$$Pr(N^{\LPF}_{i+1} \geq \beta_{i+1}) \leq \const{f}{i+1}{i+1}e^{-\const{d}{i+1}{i+1}\alpha^2/k} + O\left(\frac{k}{\alpha}\right)Pr(X \notin \Eps_i)$$
where $\Eps_i$ is the set of states such that $n_j < \beta_j$ for all $j\leq i$.
\end{restatable}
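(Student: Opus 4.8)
The plan is to apply the modified drift bound of \cite{weng2020achieving} (their Lemma~A.1) to the Lyapunov function $V(x)=n_{i+1}$, exactly as Lemma~\ref{lem:driftbound} applied the ordinary drift bound to $V(x)=n_1$. The one new feature relative to the base case is that, under LPF, class $i+1$ jobs queue behind all lower classes, so the drift of $V$ can be \emph{positive} when too many low-class jobs are present, i.e.\ when $X\notin\Eps_i$. The modified bound is built for exactly this situation: it tolerates positive drift on a region whose stationary probability is controlled, and charges an extra term proportional to that probability. Accordingly, I would split the state space into a ``good'' region $\{X\in\Eps_i\}$, on which I show negative drift once $n_{i+1}$ is large, and a ``bad'' region $\{X\notin\Eps_i\}$, on which I only bound the magnitude of the (possibly positive) drift.

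First I would establish negative drift on the good region. Under LPF, class $i+1$ jobs receive every server not claimed by classes $1,\dots,i$. When $X\in\Eps_i$ we have $n_j\le\beta_j-1$ for each $j\le i$, so these classes occupy at most $\sum_{j=1}^i(\beta_j-1)c_j\le k\sum_{j=1}^i\rho_j+i\alpha/\ell$ servers, using $(\beta_j-1)c_j\le\rho_jk+\alpha/\ell$ from the definition of $\beta_j$. Hence at least $k-k\sum_{j=1}^i\rho_j-i\alpha/\ell$ servers remain for class $i+1$, and since $\alpha=k(1-\rho)$ and $i+1\le\ell$ this leftover is at least $\rho_{i+1}k+\alpha/\ell$. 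Because $c_{i+1}=O(1)$ for $i+1\le I$, whenever $n_{i+1}\ge\beta_{i+1}$ the class demands at least $\rho_{i+1}k+\alpha/\ell$ servers, so LPF allocates $K_{i+1}\ge\rho_{i+1}k+\alpha/\ell$ and
$$\Delta V(x)=\lambdak_{i+1}-\mu_{i+1}K_{i+1}\le\lambdak_{i+1}-\mu_{i+1}\Bigl(\rho_{i+1}k+\tfrac{\alpha}{\ell}\Bigr)=-\,\mu_{i+1}\tfrac{\alpha}{\ell}.$$
Thus the negative drift has magnitude $\Theta(\alpha)$, matching the base case, and (since $c_{i+1}=O(1)$) it persists over a window of $\Theta(\alpha)$ jobs below $\beta_{i+1}$.

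Next I would assemble the tail bound. On the bad region the service rate of class $i+1$ can be zero, so the drift is at most $\lambdak_{i+1}=\Theta(k)$; the jumps of $V$ are of unit size and occur at total rate $q_{\max}=\lambdak+\mu_{\max}=\Theta(k)$. Feeding these ingredients---negative drift $-\Theta(\alpha)$ on the good region with $n_{i+1}$ large, maximum drift $+\Theta(k)$ elsewhere, and bounded unit jumps---into Lemma~A.1 of \cite{weng2020achieving} yields a bound of the form
$$Pr\bigl(N^{\LPF}_{i+1}\ge\beta_{i+1}\bigr)\le\const{f}{i+1}{i+1}\,e^{-\const{d}{i+1}{i+1}\alpha^2/k}+\frac{D_{\max}}{\delta}\,Pr(X\notin\Eps_i),$$
where $\delta=\mu_{i+1}\alpha/\ell=\Theta(\alpha)$ and $D_{\max}=\lambdak_{i+1}=\Theta(k)$. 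The first term arises just as in Lemma~\ref{lem:driftbound}: negative drift of size $\Theta(\alpha)$ sustained over a window of width $\Theta(\alpha)$ collapses to $\exp(-\Theta(\alpha^2/k))$. The prefactor of the second term is the ratio $D_{\max}/\delta=\Theta(k)/\Theta(\alpha)=O(k/\alpha)$, which is precisely the claimed factor.

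The main obstacle is twofold. The delicate part of the drift calculation is confirming that, after the ceiling losses, the leftover-server count on the good region is genuinely at least $\rho_{i+1}k+\alpha/\ell$, so that the negative-drift threshold sits at or below $\beta_{i+1}$ and the bound really controls $Pr(N_{i+1}\ge\beta_{i+1})$. The second and more important difficulty is matching our setup to the exact hypotheses of Lemma~A.1 of \cite{weng2020achieving} and extracting the $O(k/\alpha)$ prefactor from its statement: the positive-drift set over which that lemma charges probability must be shown to lie inside $\{X\notin\Eps_i\}$, up to the low-$n_{i+1}$ states that do not affect the tail, so that the charged probability is exactly $Pr(X\notin\Eps_i)$ rather than something larger.
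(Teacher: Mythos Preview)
Your proposal is correct and follows essentially the same route as the paper: the paper likewise takes $V(x)=n_{i+1}$, shows that on $\Eps_i$ the leftover capacity for class $i+1$ is at least $\rho_{i+1}k+\alpha/\ell-O(1)$ (yielding $\Theta(\alpha)$ negative drift once $n_{i+1}$ is large enough), bounds the positive drift off $\Eps_i$ by $\lambdak_{i+1}=O(k)$, and then invokes Lemma~A.1 of \cite{weng2020achieving}. Your server-count computation, using $n_j\le\beta_j-1$ on $\Eps_i$ so that $(\beta_j-1)c_j\le\rho_jk+\alpha/\ell$ with no ceiling correction, is in fact slightly cleaner than the paper's version, which bounds by $c_j\beta_j$ and carries an extra $-\sum_{j\le i}c_j=O(1)$ term that it then absorbs via $\alpha=\omega(1)$.
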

\begin{proof}
We define a Lyapunov function $V(x)=n_{i+1}$.

For any state, $x$, in $\Eps_i$, the number of servers available to class $i+1$ jobs is at least
$$k-\sum_{j=1}^i c_j \cdot\beta_j \geq k - \sum_{j=1}^i \rho_jk +\frac{\alpha}{\ell} +c_j \geq \rho_{i+1}k + \frac{\alpha}{\ell} - \sum_{j=1}^i c_j.$$
Hence, as long as a state is in $\Eps_i$, we can apply the same drift argument used in \ref{lem:driftbound}.
Here, we are relying on the assumption that, for any $i\leq I$,
$$\sum_{j=1}^{i} c_j = O(1) \mand \alpha=\omega(1),$$
so the extra servers used by the lower-class jobs do not affect our drift argument order-wise.

The issue is that when a state is not in $\Eps_i$, there may not be enough free servers to guarantee negative drift of the Lyapunov function.  However, the positive drift in these states is upper bounded by $\lambdak_{i+1} = O(k)$.  Hence, according to the lemma from \cite{weng2020achieving}, we can bound the tail probability of $N_{i+1}$ in terms of the stationary probability that a state lies outside $\Eps_i$ and causes positive drift in our Lyapunov function.  Specifically, we have
$$Pr(N^{\LPF}_{i+1} > \beta_{i+1}) \leq \const{f}{i+1}{i+1}e^{-\const{d}{i+1}{i+1}\alpha^2/k} + O\left(\frac{k}{\alpha}\right)Pr(X \notin \Eps_i)$$
for some constants $\const{f}{i+1}{i+1}>0$ and $\const{d}{i+1}{i+1}>0$.
\end{proof}

\section{Proof of Equation (\ref{eq:product})}\label{sec:momentbound}
We divide the proof of \eqref{eq:product} into two parts: a technical lemma and a main lemma.

We begin by proving the following technical lemma using the drift-based moment bound from Lemma 10 of \cite{wang2022heavy}.
See Appendix~\ref{ecomp} for a full statement of the moment bound.
\begin{restatable}{lemma}{momentbound}\label{lem:momentbound}
For any class of jobs, $i$,
$$\E\left[\left(N_i^{\LPF}\right)^2\right] = O\left(k^2 + \left(\frac{k}{\alpha}\right)^2\right).$$
and thus in the \shw regime,
$$\E\left[\left(N_i^{\LPF}\right)^2\right] = O(k^2).$$
\end{restatable}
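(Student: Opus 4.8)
The plan is to prove this second-moment bound by a direct application of the drift-based moment bound from Lemma~10 of~\cite{wang2022heavy}, using the same Lyapunov function $V(x) = n_i$ that appears throughout the tail-bound arguments of Appendices~\ref{sec:driftbound} and~\ref{sec:app:mod}. The key observation driving the approach is that, under \LPF, the number of class~$i$ jobs experiences strong negative drift once $n_i$ grows past the level $\rho_i k + \Theta(\alpha)$, and the drift-based moment bound converts this negative-drift-at-scale into a control on $\E[N_i^2]$. The rough expectation is that $\E[N_i^2]$ will be dominated by $(\rho_i k)^2 = O(k^2)$ coming from the ``typical'' operating point where roughly $\rho_i k$ jobs keep their servers busy, plus a fluctuation term of order $(k/\alpha)^2$ coming from the inverse of the drift magnitude $\sim \alpha/\ell$.

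First I would set up the Lyapunov function $V(x) = n_i$ and recall from the tail-bound analyses that when $n_i \geq \rho_i k + \tfrac{\alpha}{2\ell}$ (and enough servers are available, which holds with high probability when lower classes are well-behaved via $\Eps_{i-1}$), the drift satisfies $\Delta V(x) \leq -\mu_i \tfrac{\alpha}{2\ell}$. Next I would invoke the moment-bound form of Lemma~10 from~\cite{wang2022heavy}, which states that for a Lyapunov function with negative drift $-\eta$ beyond a threshold $B$ and bounded jump sizes, the stationary second moment is controlled by a quantity of order $B^2 + (q_{\max}/\eta)^2$ or similar. Here the threshold is $B = \rho_i k + \Theta(\alpha) = O(k)$, the drift magnitude is $\eta = \Theta(\alpha)$, and the maximum transition rate is $q_{\max} = \mu_{\max} + \lambdak = \Theta(k)$, so the fluctuation scale is $q_{\max}/\eta = O(k/\alpha)$. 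Assembling these gives $\E[(N_i^{\LPF})^2] = O(k^2 + (k/\alpha)^2)$.

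The specialization to the \shw regime is then immediate: since $\alpha = \omega(\sqrt{k \log k})$ implies $\alpha = \omega(\sqrt{k})$, we have $k/\alpha = o(\sqrt{k})$, so $(k/\alpha)^2 = o(k) = O(k^2)$, and the second term is absorbed into the first, yielding $\E[(N_i^{\LPF})^2] = O(k^2)$. I would also note that for classes $i > I$ the same argument applies either with $V(x)=n_i$ directly or, where needed, via the work-based Lyapunov function $V(x) = \delta_E^2$ used in Lemma~\ref{if:shw:elastic:opt}, since the moment bound is exactly what that proof invokes to control $\E[\delta_E^2]$.

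The main obstacle I anticipate is verifying the technical hypotheses of the moment-bound lemma cleanly for the multiclass \LPF\ dynamics, specifically ensuring the negative-drift condition holds uniformly enough despite the coupling between classes. The subtlety is that the drift of $n_i$ depends on $K_i$, the number of servers actually allocated to class~$i$, which can be throttled when lower classes occupy many servers (i.e., when $X \notin \Eps_{i-1}$). The clean resolution is that the moment bound only requires negative drift \emph{outside a bounded region}, and the positive drift inside the exceptional region is bounded by $\lambdak_i = O(k)$; combined with the vanishing probability of $X \notin \Eps_{i-1}$ established in Lemma~\ref{lem:shw:inelastic}, this perturbation does not change the order of the bound, exactly as in the modified-drift argument of Appendix~\ref{sec:app:mod}. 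Once the hypotheses are checked, the algebra is routine.
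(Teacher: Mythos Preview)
Your choice of Lyapunov function $V(x)=n_i$ does not meet the hypothesis of the moment bound in Lemma~10 of~\cite{wang2022heavy}. That lemma requires negative drift in \emph{every} state with $V(x)$ above the threshold. But the drift of $n_i$ is $\lambdak_i-\mu_i K_i$, and $K_i$ can be zero even when $n_i$ is arbitrarily large, namely whenever lower classes already occupy all $k$ servers. The exceptional set $\{X\notin\Eps_{i-1}\}$ is \emph{not} a bounded region in $n_i$; it is unbounded in $n_i$. So your ``clean resolution''---that the moment bound tolerates positive drift inside a bounded region---does not apply here. The modified-drift lemma of Appendix~\ref{sec:app:mod} only yields tail bounds, and its correction term $O(k/\alpha)\Pr(X\notin\Eps_{i-1})$ does not decay in $t$, so it does not integrate into a second-moment bound without substantial extra work.

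The paper avoids this entirely by taking the aggregate-work Lyapunov function $V(x)=\sum_{j=1}^{i} n_j/\mu_j$. When $V(x)\geq 2k/\mu_{\min}$ there are more than $k$ jobs in classes $1,\dots,i$, so under LPF all $k$ servers are \emph{deterministically} busy on those classes, regardless of how the jobs are split among them. The drift is then exactly $\sum_{j\leq i}k\rho_j-k=-k\bigl(1-\sum_{j\leq i}\rho_j\bigr)\leq-\alpha$ in every such state, with no exceptional set and no appeal to Lemma~\ref{lem:shw:inelastic}. The standard moment bound now applies directly, giving $\E[V(X)^2]=O\bigl(k^2+(k/\alpha)^2\bigr)$, and since $N_i\leq\mu_i V(X)$ the claim follows. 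The idea you are missing is to aggregate over all classes $\leq i$ so that the negative-drift condition becomes unconditional.
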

\begin{proof}

We begin by defining the Lyapunov function $V(x) = \sum_{j=1}^{i} \frac{n_i}{\mu_i}$ and let $\mu_{min} = \min_{j\leq i} \mu_j$.
Consider any state $x$ where $V(x) \geq \frac{2k}{\mu_{min}}$.  This implies that
$\sum_{j=1}^i n_i \frac{\mu_{min}}{\mu_i} \geq 2k$
and thus
$\sum_{j=1}^i n_i  \geq 2k.$

Because there are more than $k$ jobs in the system, and each job can use at least one server, there are no idle servers when $V(x) \geq \frac{2k}{\mu_{min}}$.  Let $k_i$ be the number of servers allocated to class $i$ jobs when the system is in state $x$.  We have
\begin{align}
\Delta V(x) &= \sum_{j=1}^i \frac{\lambdak_j}{\mu_j} - \sum_{j=1}^i \frac{k_j\mu_j}{\mu_j}\\
&=-k(1-\sum_{j=1}^i\rho_i) = -\alpha.
\end{align}

The total transition rate out of any state is upper bounded by some $O(k)$ term.
Furthermore, the change in $V$ between adjacent states is bounded by $\sum_{j=1}^\ell\frac{1}{\mu_j}$.
Hence, the bound from \cite{wang2022heavy} yields
$$\E\left[V(X)^2\right] \leq d_1k^2 + d_2 + \left(\frac{d_3 k + \alpha}{\alpha}\right)^2$$
for some constants $d_1, d_2, d_3 >0$.
This gives
$$\E\left[\left(N_i^{\LPF}\right)^2\right] \leq \mu_i \cdot \E\left[V(X)^2\right] = O\left(k^2 +\left(\frac{k}{\alpha}\right)^2\right).$$

In the \shw regime, we have $\alpha=\omega(\sqrt{k\log{k}})$, and thus
$$\E\left[\left(N_i^{\LPF}\right)^2\right] = O(k^2).$$
\end{proof}

We now use the technical lemma to prove our main lemma, restated from \eqref{eq:product}.
\begin{restatable}{lemma}{lem-lb}\label{lem:lb}
For sufficiently large $k$, when $\alpha = \omega\left(\sqrt{k\log k}\right)$,
$$\E[K_i\delta_i] \geq (\rho_i k + \log k)\E[\delta_i] - O(1), \quad \forall i\leq I.$$ 
\end{restatable}
\begin{proof}
To get this lower bound, we will use the bound on $Pr(X\notin \Eps_{i-i})$ from Lemma \ref{lem:shw:inelastic}.
Note that when $X\in \Eps_{i-1}$, we have
$$K_i \geq k-\sum_{j=1}^{i-1}k\rho_j + \frac{\alpha}{\ell} + c_j \geq k\rho_i + \frac{\alpha}{\ell}- \sum_{j=i}^{i-1} c_j,$$
and thus, for any $i\leq I$ and for sufficiently large $k$,
\begin{align}
\E[K_i\delta_i] \geq \E[\1{N_i\geq\theta_i}\1{X \in \Eps_{i-1}}\delta_iK_i] & \geq (k\rho_i + \log k )\E[\1{N_i \geq \theta_i}\1{X\in\Eps_{i-1}}\delta_i]\label{eq:region}\\
&\geq (k\rho_i+\log k)\E[\1{N_i \geq \theta_i}\delta_i - \1{\delta_i \geq \theta_i}\1{X\notin\Eps_{i-1}}\delta_i]\\
&\geq (k\rho_i+\log k )\E[\1{N_i \geq \theta_i}\delta_i - \1{X\notin\Eps_{i-1}}\delta_i]\\
&= (k\rho_i+\log k )\E[\delta_i - \1{X\notin\Eps_{i-1}}\delta_i].\label{eq:cauchy}
\end{align}
Here, we exploit the fact that there are $\rho_ik + \frac{\alpha}{\ell} - O(1)$ free cores when $X\in \Eps_{i-1}$, and we are conditioning on having enough jobs to use $\rho_ik + \log k$ cores.  Because we are in a \shw scaling regime, $\frac{\alpha}{\ell} - O(1) > \log k$ for sufficiently large $k$, so \eqref{eq:region} holds for sufficiently large $k$.

We now apply Cauchy-Schwarz to show that the second, negative term inside the expectation is $o\left(\frac{1}{k}\right).$
That is, we want to show that
$$\E[\1{X\notin\Eps_{i-1}}\delta_i] \leq \sqrt{\E[(\1{X\notin\Eps_{i-1}})^2]\E[\delta_i^2]} = \sqrt{\E[(\1{X\notin\Eps_{i-1}})]\E[\delta_i^2]} = o\left(\frac{1}{k}\right).$$
For the indicator term, it suffices to apply the bound from Lemma \ref{lem:shw:inelastic}.
We then need to show that $\E[\delta_i^2]$ is not too large.
We bound $\E[\delta_i^2]$ using our technical lemma, Lemma \ref{lem:momentbound}, to get
$$\E[\delta_i^2] \leq \E[N_i^2] \leq \mu_iE[V(x)^2] =O(k^2).$$
Hence, the second term in \eqref{eq:cauchy} is $o(\frac{1}{k})$. 
This gives the desired bound on $\E[K_i\delta_i]$.
\end{proof}

\section{Proof of Lemma \ref{lem:TETHRESH}}\label{sec:thresh:lems}

\tethresh*
\begin{proof}
Consider the Lyapunov function $V(n_1,n_2)=n_2$ for any state $(n_1,n_2)$.  Then, when $V(n_1,n_2)\ge \thr$, we know that \thrpolicy must be prioritizing class 2 jobs, and thus the drift is given by
\begin{align*}
\Delta V(n_1,n_2) = \lambdak_2 - k\mu_2 = - k\mu_2(1-\rho_2).
\end{align*}
Therefore, by the moment bounds in \cite{wang2022heavy}, we have
\begin{align*}
\E[N_2^{\THRESH}]&\le 2\thr + 4\cdot\frac{\lambdak_2+k\mu_2(1-\rho_2)}{k\mu_2(1-\rho_2)}\\
&=2\thr + \frac{4}{1-\rho_2}.
\end{align*}
Then by Little's law, we have
\[
\E[T_2^{\THRESH}]=\frac{\E[N_2^{\THRESH}]}{\lambdak_2}\le\frac{2\thr}{k\mu_2\rho_2}+\frac{4}{k\mu_2\rho_2(1-\rho_2)}.
\]
\end{proof}

\section{Simulations in a Middle Scaling Regime}
While our theoretical results apply to both the \shw and \snds scaling regimes, we can simulate the performance of various policies in the middle scaling regimes between these two cases.
Figure \ref{fig:middle} shows that performance in the middle scaling regimes appears similar to performance in the \shw regimes, with LPF and \thrpolicy converging to the optimal mean response time as $k\to\infty$.
\label{sec:app:middle}
\begin{figure}[ht]
\centering
\includegraphics[width=.95\textwidth]{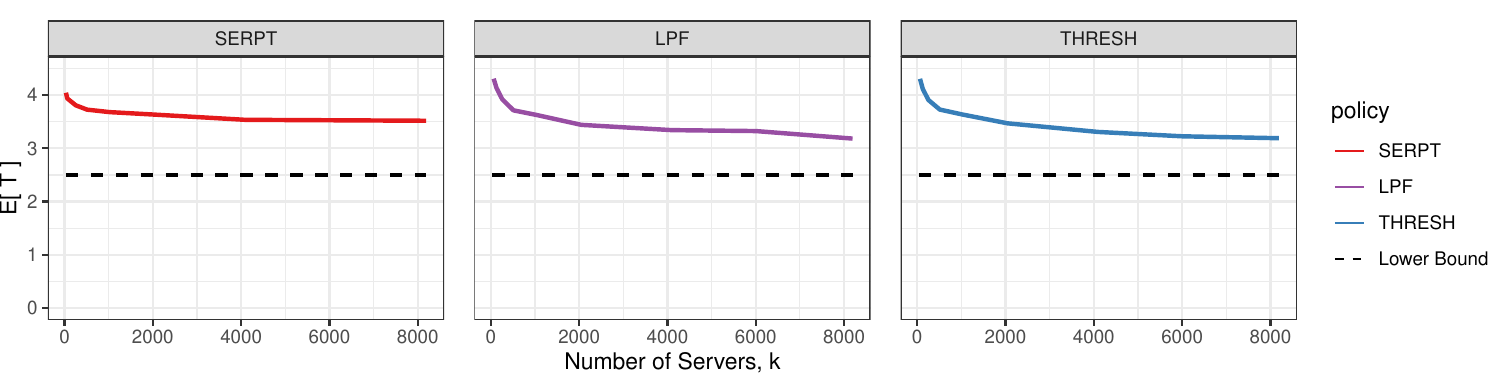}
\caption{Simulations of various scheduling policies in a middle scaling regime with $\alpha=2k^{1/4}$.  Here, $\ell=4$, and $c_1=1$, $c_2=4$, $c_3=\log k$, $c_4=k$ and $\mu_1=.2$, $\mu_2=.05$, $\mu_3=.3$, $\mu_4=.1$.  
In this case, as in the \shw regimes, both LPF and \thrpolicy appear to perform well as $k$ increases while SERPT appears to converge to an asymptotically suboptimal mean response time.
}
\label{fig:middle}
\end{figure}

\section{Full Statement of Drift Bounds}\label{ecomp}
\subsection{Lemma 10 from \cite{wang2022heavy}}
\begin{lemma}[Lemma 10 from \cite{wang2022heavy}]\label{rmk:tail}
Let $\{X(t)\}$ for $t\geq0$ be an irreducible, nonexplosive, positive-recurrent continuous-time Markov chain over a countable state space $\chi$.  Let $X$ be the stationary distribution of this Markov chain.  Consider a Lyapunov function $V: \chi \to \mathbb{R}_+$.  Let $\Delta V(x)$ be the drift of the Lyapunov function, defined as
$$\Delta V(x) = \sum_{x'\in \chi: x \neq x'} q_{xx'}\left(V(x')-V(x)\right)$$
where $q_{xx'}$ is the corresponding entry of the infinitesimal generator of $\{X(t)\}$.
If $V$ satisfies the following conditions
\begin{enumerate}
\item There exist constants $\gamma > 0$ and $B>0$ such that $V(x) > B \implies \Delta V(x) \leq -\gamma$ for any $x\in\chi$.
\item There exists a constant $\nu_{max} < \infty$ such that $\sup_{x, x' \in \chi : q_{xx'} >0} \mid V(x') - V(x) \mid = \nu_{max}$.
\item There exists a constant $\overline{q} < \infty$ such that $\sup_{x\in\chi} -q_{xx} = \overline{q}$.
\end{enumerate}
Then for any $j>0$, we have
\begin{equation}
Pr\left(V(X) > B + 2\nu_{max}j\right) \leq \left(\frac{q_{max}\nu_{max}}{q_{max}\nu_{max} + \gamma}\right)^{j+1}\label{eq:tail:weina}
\end{equation}
where
$$q_{max} = \sup_{x\in\chi} \sum_{x'\in\chi: V(x) < V(x')} q_{xx'}.$$
The bound in \eqref{eq:tail:weina} can then be used to bound the $m$th moment of $V(X)$ as follows:
\begin{equation}
\E[V(X)^m] \leq (2B)^m + (4\nu_{max})^m\left(\frac{q_{max}\nu_{max} + \gamma}{\gamma}\right)^m m!\label{eq:moment:weina}
\end{equation}
\end{lemma}

\subsection{Lemma A.1 from \cite{weng2020achieving}}
\begin{lemma}[Lemma A.1 from \cite{weng2020achieving}]
This lemma extends the result from Lemma \ref{rmk:tail}.
While this lemma was originally proven for finite-state continuous-time Markov chains, it is trivially extended to infinite-state Markov chains obeying the conditions in Lemma \ref{rmk:tail}.
This lemma adds the condition that some states with $V(x) > B$ are permitted to have bounded positive drift.
Specifically, for constants $\delta >0$, $\gamma>0$, and $B>0$, let $\Eps$ be a set of states such that:
\begin{itemize}
\item $V(x) > B \implies \Delta V(x) \leq -\gamma$ for any $x\in\Eps$.
\item $V(x) > B \implies \Delta V(x) \leq \delta$ for any $x\notin\Eps$.
\end{itemize}
In this case, the tail bound in \eqref{eq:tail:weina} becomes
\begin{equation}
Pr\left(V(X) > B + 2\nu_{max}j\right) \leq \left(\frac{q_{max}\nu_{max}}{q_{max}\nu_{max} + \gamma}\right)^{j} +\left(\frac{\delta}{\gamma} +1 \right)Pr(X\notin\Eps) \label{eq:tail:mod:weina}
\end{equation}
\end{lemma}
\end{appendices}

\end{document}